\newcommand{\cmark}{\ding{51}}%
\newcommand{\xmark}{\ding{55}}%
\colorlet{red1}{purple!40!gray!60}
\newcommand{\specialcell}[2][c]{%
  \begin{tabular}[#1]{@{}l@{}}#2\end{tabular}}
\colorlet{circle edge}{blue!50}
\colorlet{circle area}{blue!20}
\tikzset{filled/.style={fill=circle area, draw=circle edge, thick},
    outline/.style={draw=circle edge, thick}}
\tikzset{
    every node/.style={font=\sffamily\small},
    main node/.style={thick,circle,draw,font=\sffamily\Large}
}
\definecolor{keywords}{rgb}{0,0,0.7}
\newcommand{\cut}[1]{}
\newcommand{\reminderc}[1]{\textcolor{blue}{[[[#1]]]}}
\newcommand{\sudeepa}[1]{{\textcolor{red}{\reminderc{\bf (Sudeepa)~#1}{\typeout{#1}}}}}
\newcommand{\amir}[1]{}
\newcommand{\reva}[1]{\color{black}{#1}}
\newcommand{\revb}[1]{\color{black}{#1}}
\newcommand{\revc}[1]{\color{black}{#1}}
\newcommand{\common}[1]{\color{black}{#1}}
\newcommand{\attrset}{\mathbb{A}}
\newcommand{\dom}{{\tt dom}}
\newcommand{\holo}{{\tt HoloClean}}
\newcommand{\ad}{{\tt ag}}
\newcommand{\cc}{{\tt c}}
\newcommand{\au}{{\tt a}}
\newcommand{\ww}{{\tt w}}
\newcommand{\pp}{{\tt p}}
\newcommand{\f}{{\tt g}}
\newcommand{\auth}{{\tt Author}}
\newcommand{\adv}{{\tt AuthGrant}}
\newcommand{\fnd}{{\tt Grant}}
\newcommand{\cites}{{\tt Cite}}
\newcommand{\writes}{{\tt Writes}}
\newcommand{\pub}{{\tt Pub}}
\newcommand{\dl}{~{:}{-}~}
\renewcommand{\vec}[1]{{\mathbf{#1}}}
\newtheorem{theorem}{Theorem}[section]
\newtheorem{proposition}[theorem]{Proposition}
\newtheorem{example}[theorem]{Example}
\newtheorem{definition}[theorem]{Definition}
\newtheorem*{lemma*}{Lemma}
\newtheorem{corollary}[theorem]{Corollary}
  \providecommand\BibTeX{{%
    \normalfont B\kern-0.5em{\scshape i\kern-0.25em b}\kern-0.8em\TeX}}}
\title{On Multiple Semantics for Declarative Database Repairs}
\author{
  Amir Gilad\\ Tel Aviv University \\ amirgilad@mail.tau.ac.il
  \and
  Daniel Deutch \\ Tel Aviv University \\ danielde@post.tau.ac.il
  \and
  Sudeepa Roy \\ Duke University \\ sudeepa@cs.duke.edu
}
\date{}
\begin{document}

\maketitle

\begin{abstract}
We study the problem of Database Repairs through a rule-based framework that we refer to as {\em Delta Rules}. Delta Rules are highly expressive and allow specifying complex, cross-relations repair logic associated with Denial Constraints, Causal Rules, and allowing to capture Database triggers of interest. We show that there are no one-size-fits-all semantics for repairs in this inclusive setting, and we consequently introduce multiple alternative semantics, presenting the case for using each of them. We then study the relationships between the semantics in terms of their output and the complexity of computation. Our results formally establish the tradeoff between the permissiveness of the semantics and its computational complexity. 
We  demonstrate the usefulness of the framework in capturing multiple data repair scenarios for an Academic Search database and the TPC-H databases, showing how using different semantics affects the repair in terms of size and runtime, and examining the relationships between the repairs. We also compare our approach with SQL triggers and a state-of-the-art data repair system.
\end{abstract}

\section{Introduction}\label{sec:intro}


The problem of data repair has been extensively studied by previous work \cite{RekatsinasCIR17,ChuIP13,ChomickiM05,BertossiKL13,FaginKK15,Afrati2009}. Many of these have focused on the desideratum of minimum cardinality, i.e., repairing the database while making the minimum number of changes \cite{Afrati2009,FaginKK15,LopatenkoB07}. In particular, when the repair only involves tuple deletion \cite{ChomickiM05,LopatenkoB07,LivshitsKR18}, this desideratum takes center stage since a n\"aive repair could simply delete the entire database in order to repair it. Such repairs are commonly used with classes of constraints such as Denial Constraints (DCs) \cite{ChomickiM05,ChuIP13}, SQL deletion triggers \cite{DBSystems0020812}, and causal dependencies \cite{RoyS14}. 

Different scenarios, however, may require different interpretations of the constraints and the manner in which they should be used to achieve a minimum repair.
For integrity constraints such as DCs,
when there is a set of tuples violating such a DC, 
any tuple in that set is a `\emph{candidate for deletion}' to repair the database. Moreover, if we allow such constraints to be influenced by deleted tuples, as needed in cascade deletions, the problem becomes more convoluted. 

In contrast, for violations of referential integrity constraints under cascade delete semantics, or other complex and user-defined constraints as in SQL triggers 
and in causal dependencies,
there is a specific tuple that is meant to be deleted if a trigger or a rule is satisfied. 
Nevertheless, if there are several triggers or causal rules, all satisfied at the same time, it remains largely unspecified and varies from system to system in what order they should be fired and when should the database be updated. For instance, by default, MySQL chooses to fire triggers in the order they have been created \cite{mysql}, and PostgreSQL fires triggers in alphabetical order in such scenarios \cite{postgres}. This may lead to different answers leaving users uncertain about why the tuples have been deleted. These systems offer an option of specifying the order in which the triggers would fire; however, this order does not guarantee a consistent semantics that leads to a minimum repair. Such constraints may also follow several different semantics in the process of cascading deletions. 
Therefore, the same set of constraints may be assigned different reasonable semantics that lead to different minimum repairs, and each choice of semantics may be suitable for a different setting. 





\begin{example}\label{ex:first}
\amir{R2,D2,D10: revised example:}
{\revb
Consider the database in Figure \ref{fig:dblp} based on an academic database \cite{mas}. It contains the tables \fnd\ (grant foundations), \auth\ (paper authors), \adv\ (a relationship of authors and grants given by a foundation), \pub\ (a publication table), \writes\ (a relationship table between \auth\ and \pub  ), and \cites\
(a citation table of citing and cited relationships). For each tuple, we also have an identifier on the leftmost column of each table (e.g., $\ad_1$ is the identifier of $\mathtt{\adv(2,1)}$).
Consider the following four constraints specifying how to repair the tables (there could be other rules capturing different repair scenarios): 
\begin{compactenum}
    \item If a \fnd\ tuple is deleted and there is an author who won a grant by this foundation, denoted as an \adv\ tuple, then delete the winning author.
    \item If an \auth\ tuple is deleted and the corresponding \writes\ and \pub\ tuples exist in the database, delete the corresponding \writes\ tuple (as in cascade delete semantics for foreign keys).
    \item 
    Under the same condition as above, delete the corresponding \pub\ tuple (not standard foreign keys, but suggesting that every author is important for a publication to exist).
    \item If a publication $p$ from the \pub\ table is deleted, and is cited by another publication $c$, while some authors of these papers still exist in the database, then delete the \cites\ tuple\footnote{{\revb An alternative version of this constraint is not conditioned by the existence of the paper authors, however, the condition is added to demonstrate a difference between the semantics in our framework}}.
\end{compactenum}
Suppose we are analyzing a subset of this database 
containing only authors affiliated with U.S. schools and only papers written solely by U.S. authors.
ERC grants are given only to European institutions and its \fnd\ tuple was incorrectly added to the U.S. database, so this tuple needs to be deleted.
However, this deletion causes violations in the above constraints.
To repair the database based on these constraints, we could proceed in various ways: 
considering the semantics of triggers and causal rules, we can delete tuples $\au_2$, $\ww_1$, $\pp_1$, $\au_3$, $\ww_2$, $\pp_2$ and $\cc$, and regain the integrity of the database but at the cost of deleting seven tuples. A different approach is to delete $\au_2$ and either $\ww_1$ or $\pp_1$, and delete $\au_3$ and either $\ww_2$ or $\pp_2$, which would only delete four tuples. 
However, if we consider the semantics of DCs, we could delete any tuple out of the set of tuples that violates the constraint. So, we can just delete the tuples $\ad_2, \ad_3$. This would satisfy the first constraint and thus the second, third and fourth constraints will also be satisfied. 
}
\end{example}

\subsection*{Our  Contributions} In this paper, we propose a novel unified constraint specification framework, with multiple alternative semantics that can be suitable for different settings, and thus can result in different `minimum repairs'. Our framework allows for semantics similar to DCs as well as causal rules, and the subset of SQL triggers that delete tuple(s) after another deletion event, 
and is geared toward minimum database repair using tuple deletions.

\cut{
We determine the relationships between the minimum repairs under the different semantics, and analyze the complexity of finding the minimum repair under each semantic. 
For the intractable cases, we find efficient algorithms that  work well in practice. Finally, we experimentally examine the performance of our algorithms and compare the repairs under the different semantics. 
We now detail the contributions of this work.
}

\begin{figure}[!htb]
    \centering \tiny
    \begin{minipage}{.35\linewidth}
        \centering
        \caption*{$\fnd$}\label{tbl:advised}
        \begin{tabular}{ c | c | c | c | c | c |}
          \cline{2-3}  & gid & name \\
            \hline $\f_1$ & 1 & NSF \\
            \hline $\f_2$ & 2 & ERC \\
            \hline
        \end{tabular}
    \end{minipage}%
    \begin{minipage}{.3\linewidth}
        \centering
        \caption*{ $\adv$}\label{tbl:authgrant}
        \begin{tabular}{ c | c | c | c | c | c |}
          \cline{2-3}  & aid & gid \\
            \hline $\ad_1$ & 2 & 1 \\
            \hline $\ad_2$ & 4 & 2 \\
            \hline $\ad_3$ & 5 & 2 \\
            \hline
        \end{tabular}
    \end{minipage}%
    \begin{minipage}{.35\linewidth}
        \centering
        \caption*{ $\auth$}\label{tbl:author}
        \begin{tabular}{c | c | c | c | c | c |}
            \cline{2-3} & aid & name \\
            \hline $\au_1$ & 2 & Maggie \\
            \hline $\au_2$ & 4 & Marge \\
            \hline $\au_3$ & 5 & Homer \\
            \hline
        \end{tabular}
    \end{minipage}
    
    \begin{minipage}{.33\linewidth}
        \centering
        \caption*{$\cites$}\label{tbl:cite}
        \begin{tabular}{c | c | c | c | c | c |}
           \cline{2-3}  & citing & cited\\
            \hline $\cc$ & 7 & 6 \\
            \hline
        \end{tabular}
    \end{minipage}%
    \begin{minipage}{.33\linewidth}
        \centering
        \caption*{$\writes$}\label{tbl:writes}
        \begin{tabular}{c | c | c | c | c | c |}
           \cline{2-3}  & aid & pid\\
            \hline $\ww_1$ & 4 & 6 \\
            \hline $\ww_2$ & 5 & 7 \\
            \hline
        \end{tabular}
    \end{minipage}%
     \begin{minipage}{.33\linewidth}
        \centering
        \caption*{$\pub$}\label{tbl:publication}
        \begin{tabular}{c | c | c | c | c | c |}
           \cline{2-3}  & pid & title\\
            \hline $\pp_1$ & 6 & x \\
            \hline $\pp_2$ & 7 & y \\
            \hline
        \end{tabular}
    \end{minipage}
    \caption{{\revb Academic database instance $D$}}\label{fig:dblp}
\end{figure}

\begin{figure}[!htb]
\begin{footnotesize}
 \begin{lstlisting}
(0) $\Delta_{\fnd}(g,n)$ :- $\fnd(g,n)$, $n = `ERC'$
(1) $\Delta_{\auth}(a,n)$ :- $\auth(a,n)$, $\adv(a,g)$, $\Delta_{\fnd}(g,gn)$
(2) $\Delta_{\pub}(p,t)$ :- $\pub(p,t), \writes(a,p), \Delta_{\auth}(a,n)$
(3) $\Delta_{\writes}(a,p)$ :- $\pub(p,t), \writes(a,p), \Delta_{\auth}(a,n)$
(4) $\Delta_{cite}(c,p)$ :- $\cites(c,p), \Delta_{\pub}(p,t), \writes(a_1,c),$ $\writes(a_2,p)$
\end{lstlisting}
\end{footnotesize}
\caption{{\revb Delta program}}\label{fig:deltaprog}
\vspace{-3mm}
\end{figure}



\textbf{Delta rules and stabilizing sets.~}
We begin by defining the concept of delta rules.
Delta rules allow for a deletion of a tuple from the database if certain conditions hold. Intuitively, delta rules are constraints specifying conditions that, if satisfied, compromise the integrity of the database. 
A stabilizing set is a set of tuples whose removal from the database ensures that no delta rules are satisfied.

\begin{example}\label{ex:third}
Reconsider Example \ref{ex:first} where the constraints are specified verbatim. We can formalize them in our declarative syntax, as shown in Figure \ref{fig:deltaprog}. Rules (1), (2), (3), and (4) express the constraints in Example \ref{ex:first}, respectively. For example, rule (3) states that if a \pub\ and a \writes\ tuples exist in the database, and the corresponding \auth\ tuple has been deleted (the $\Delta_{\auth}(a,n)$ atom), then delete the \pub\ tuple (this is the head of the rule). \amir{R2,D10:} {\revb Rule (0) initializes the deletion process (more details about this in Section \ref{sec:deltarules}).}
In Example \ref{ex:first}, $\{\au_2,\au_3, \ww_1, \ww_2,\pp_1,\pp_2, \cc\}$, $\{\au_2,\au_3,\ww_1, \ww_2,\pp_1,\pp_2\}$, $\{\au_2,\au_3,\ww_1, \ww_2\}$, and $\{\ad_2, \ad_3\}$ are all stabilizing sets, as a removal of any of these sets of tuples and an addition of these tuples to the delta relations ensure that no delta rules are satisfied. 
\end{example}

Although we can easily verify that the deletion of any set of tuples in Example~\ref{ex:third} guarantees that the  database is `stable', it may not be immediately obvious under what scenarios we would obtain these sets as the answer, or whether they  correspond to some notion of `optimal repair'. 

\textbf{Semantics of delta rules.~} 
 To address this, we define four semantics of delta rules and define the minimum repair according to these. \amir{R3,D3:} {\revc A semantics in this context implies a manner in which we interpret the rules, either as integrity constraints for which we define a global minimum solution, or as means of deriving tuples in different ways.
 {\em Independent semantics} aims at finding the globally optimum repair such that none of the rules are satisfied on the entire database instance. It is similar to optimal repair in presence of DCs like violations of functional dependencies \cite{ChomickiM05}, but delta rules capture more general propagations of conflict resolutions, where deleting one tuple to resolve a conflict may lead to deletion of another tuple. 
 {\em Step semantics}, is geared towards the semantics of the aforementioned subset of SQL triggers and causal rules, and is a fine-grained semantics. It evaluates one rule at a time (non-deterministically) and updates the database immediately by removing the tuple, which may in turn lead to further tuple deletions. 
 {\em Stage semantics} also aims to capture triggers and causal rules. However, as opposed to step semantics, it deterministically removes tuples in stages. In particular, it evaluates all delta rules based on the stage of the database in the previous round, and therefore the order of firing the rules does not matter (similar to semin\"aive evaluation of datalog \cite{AbiteboulHV95}).
 Finally, {\em end semantics} is similar to the standard datalog evaluation, where all possible delta tuples are first derived and the database is updated at the end of the evaluation process. We use end semantics as a baseline for the other semantics.}

\begin{example}\label{ex:four}
Continuing Example \ref{ex:third}, the results corresponding to different semantics are 
$End(P,D) = \{\au_2,\au_3,\ww_1, \ww_2,\pp_1,\pp_2, \cc\}$, $Stage(P,D) = \{\au_2,\au_3,\ww_1, \ww_2,\pp_1,\pp_2\}$,
$Step(P,D) = \{\au_2,\au_3,\ww_1, \ww_2\}$,
and $Ind(P,D) = \{\ad_2, \ad_3\}$. \amir{R2,D3:} We detail the formal definitions of the semantics in Section \ref{sec:deltarules}.
\end{example}

\textbf{Relationships between the results of different semantics.~}
We study the relationships of containment and size between the results according to the four semantics. The results are summarized in Figure \ref{fig:mss_classes}, where the size of the result of independent semantics is always smaller or equal to the sizes of the results of stage and step semantics. We show there are case where the result of step semantics subsumes the result of stage semantics and vice-versa. 


\textbf{Complexity of finding the results}
We show that finding the result for stage and end semantics is PTIME, while finding the result for step and independent semantics is NP-hard (also shown in Figure \ref{fig:mss_classes}). 
For independent semantics, we devise an efficient algorithm using \emph{data provenance}, leveraging a reduction to the min-ones SAT problem \cite{KratschMW16}. We store the provenance \cite{GreenKT07} as a Boolean formula and find a satisfying assignment that maps the minimum number of negated variables to True. 
For step semantics, we also devise an efficient algorithm based on the structure of the provenance graph, traversing it in topological order and choosing tuples for the result set as we go.  

\textbf{Experimental evaluation.~}
We examine the performance of our algorithms for a variety of programs with varying degree of complexity on an academic dataset \cite{mas} and the TPC-H dataset \cite{tpc}. We measure the performance in terms of subsumption relationship between the results computed under different semantics, the size of these results, the  execution time of each algorithm to compute the result for every semantics. Finally, for our heuristic algorithms, we break down the execution time in the context of multiple ``classes" of programs. We also compare our approach to SQL triggers in PostgreSQL and MySQL, {\revc and to \holo\ \cite{RekatsinasCIR17}}. 

\begin{figure}
\centering
\begin{tikzpicture}[scale=0.5]
 \node (a) [align=center] at (-2,4) {End \\ (PTime)};
 \node (b) [align=center] at (-5,2) {Stage \\ (PTime)};
 \node (c) [align=center] at (1,2) {Step \\ (NP-hard)};
 \node (d) [align=center] at (-2,0) {Independent \\ (NP-hard)};
 \draw (d) -- (b) -- (a);
 \draw (d) -- (c) -- (a);
 \draw[] (b) node[align=center] at (-3.9,3.2)
    {
      \rotatebox[origin=c]{45}{$\subseteq$}
    }(b);
 \draw[->] (b) node[align=center] at (0,3.3)
    {
      \rotatebox[origin=c]{-45}{$\supseteq$}
    }(c);
 \draw[] (c) node[align=center] at (-1,1.3)
    {
      \rotatebox[origin=c]{45}{$\leq$}
    }(d);
 \draw[] (c) node[align=center] at (-2.8,1.3)
    {
      \rotatebox[origin=c]{-45}{$\geq$}
    }(d); 
\end{tikzpicture}
\caption{
Complexity and relationships among the different semantics by size and containment
}
 \vspace{-0.3cm}
\label{fig:mss_classes}
\end{figure}

\vspace{-2mm}
\section{preliminaries}\label{sec:prelim}
We start by reviewing basic definitions for databases and non-recursive datalog programs. 
%
A relational schema is a tuple $\mathbf{R} = (R_1, \ldots, R_k)$ where $R_i$ is a relation name (or atom). Each relation $R_i$ ($i = 1$ to $k$) has a set of attributes $\attrset_i$, and we use $\attrset = \cup_i \attrset_i$
to denote the set of all attributes in $\mathbf{R}$. For any attribute $A \in  \attrset$, $\dom(A)$ denotes the domain of $A$. 
A database instance $D$ is a finite set of tuples over $\mathbf{R}$, and we will use $R_1, \cdots, R_k$ to denote both the relation names and their content in  $D$ where it is clear from the context.

 \textbf{Non-recursive datalog.~} We will use standard datalog program comprising rules of the form $Q(\vec{X}) \dl T_{i_1}(\vec{Y}_{i_1}), \ldots, T_{i_\ell}(\vec{Y}_{i_\ell})$, where $\vec{Y_{i_1}}, \cdots, \vec{Y_{i_\ell}}$ contain variables or constants, and $\vec{X}$ is a subset of the variables in $\cup_{j = 1}^{\ell} \vec{Y_{i_j}}$. In this rule,
 $Q$ is called an \emph{intensional} (or derived) relation, and  $T_i$'s 
are either intensional relations or are \emph{extensional} (or base) relations from  
 $\mathbf{R}$.
 For brevity, we use the notations $body(Q)$ for the set  $\{T_{i_1}(\vec{Y}_1), \ldots, T_{i_\ell}(\vec{Y}_{i_\ell})\}$, and $head(Q)$ for $Q(\vec{X})$. A datalog program is simply a set of datalog rules. 
In this paper,  we consider programs $P = \{r_1, \ldots, r_m\}$ such that for some $i, j$, the relation name of $head(r_i)$ is an element in $body(r_j)$, but $P$ is equivalent to a non-recursive program. These are called bounded programs and are not inherently recursive \cite{AbiteboulHV95}. 
 
 Let $D$ be a database and $Q(\vec{X}) \dl T_{i_1}(\vec{Y}_{i_1}), \ldots, T_{i_\ell}(\vec{Y}_{i_\ell})$ be a datalog rule, both over the schema $\mathbf{R}$ (i.e., $\forall R_i \in body(Q).~ R_i \in \mathbf{R}$). An assignment to $Q$ is a function $\alpha: body(Q) \to D$ that respects relation names. We require that a variable $y_j$ will not be mapped to multiple distinct values, and a constant $y_j$ will be mapped to itself. We define $\alpha(head(Q))$ as the tuple obtained from $head(Q)$ by replacing each occurrence of a variable $x_i$ by $\alpha(x_i)$.
 
 Given a database $D$ and a datalog program $P$, we say that it has reached a {\em fixpoint} if no more tuples can be added to the result set using assignments from $D$ to the rules of $P$. The fixpoint, denoted by $P(D)$, is then the database obtained by adding to $D$ all tuples derived from the rules of $P$.
\begin{example}\label{ex:fixpont_end}
Consider the database $D$ in Figure \ref{fig:dblp} and the program $P$ in Figure \ref{fig:deltaprog}, and consider for now all $\Delta$ relations as standard intensional relations. 
When the rules are evaluated over the database, after deriving $\Delta_{\fnd}(2,ERC)$ from rule (0), we have two assignments to rule (1): $\alpha_1$, $\alpha_2$, where $\alpha_1$ ($\alpha_2$) maps the first, second and third atoms to  $\au_2$ ($\au_3$), $\ad_2$ ($\ad_3$), and $\Delta_{\fnd}(2,ERC)$ respectively, which generate $\Delta_{\auth}(4,Marge)$ and $\Delta_{\auth}(5,Homer)$. 
Next we have two assignments to rule (2): $\alpha_3$, $\alpha_4$, where $\alpha_{3}$ ($\alpha_{4}$) maps the first, second and third atoms to $\pp_2$ ($\pp_3$),  $\ww_1$ ($\ww_2$), and $\Delta_{\auth}(4,Marge)$ ($\Delta_{\auth}(5,Homer)$) respectively. The fixpoint of this evaluation process is the database $P(D) = D\cup \{\Delta_{\fnd}(2,ERC), \Delta_{\auth}(4,Marge),$ $\Delta_{\auth}(5,Homer),  \Delta_{\writes}(4,6), \Delta_{\writes}(5,7), \Delta_{\pub}(6,x),$ $\Delta_{\pub}(7,y),$ $\Delta_{\cites}(7,6)\}$. This evaluation corresponds to \emph{end semantics} as discussed later.
\end{example}

\section{Framework For Delta Rules}\label{sec:deltarules}
We now formulate the model used in the rest of the paper. 

\subsection{Delta Relations, Rules, and  Program}
\textbf{Delta Relations.~} Given a schema $\mathbf{R} = (R_1, \ldots, R_k)$ where $R_i$ has attributes $\attrset_i$, the delta relations $\mathbf{\Delta} =  (\Delta_1, \ldots, \Delta_k)$ will be used to capture tuples to be deleted from $R_1, \ldots. R_k$ respectively.  Therefore, each relation $\Delta_i$ has the same set of attributes $\attrset_i$ {\revb (the `full' notation for $\Delta_i$ is $\Delta_{R_i}$, but we  abbreviate it)}. 


\cut{
\begin{definition}
Given a schema $\mathbf{R} = \{R_1, \ldots, R_k\}$, for each $R_i$ with the attribute tuple $(A^i_1, \ldots, A^i_m)$, define the delta relation $\Delta_i$ with the same attribute tuple.
\end{definition}

To shorten notation, we will denote by $\Delta(\mathbf{R})$ the relation set $\{\Delta_1, \ldots, \Delta_k\}$, where $\mathbf{R} = \{R_1, \ldots, R_k\}$. 
}


\textbf{Delta rules and program.~} 
A delta program is a datalog program where every intensional relation is of the form $\Delta_i$ for some $i$.

\begin{definition}\label{def:deltarule}
Given a schema $\mathbf{R} = (R_1, \ldots, R_k)$ and the corresponding delta relations $\mathbf{\Delta} =  (\Delta_1, \ldots, \Delta_k)$, a delta rule is a datalog rule of the form $\Delta_i(\vec{X}) \dl R_i(\vec{X}), Q_1(\vec{Y}_1), \ldots, Q_l(\vec{Y}_l)$ where $Q_i \in \mathbf{R} \cup \mathbf{\Delta}$.
	
\end{definition}

Intuitively, the condition $Q_i \in \mathbf{R} \cup \mathbf{\Delta}$ means that delta rules can have cascaded deletions when some of the other tuples are removed.  
Note that the same vector $\vec{X}$ that appears in the head $\Delta_i$, also appears in the body in the atom with relation $R_i$. This is because we need the atom $R_i(X)$ in the body of the rule so that we only delete existing facts. Also, $Y_i$ can intersect with $X$ and any other $Y_j$. 
We will refer to a set of delta rules as a {\em delta program}.

\begin{example}
Consider rule (2) in Figure \ref{fig:deltaprog}. This rule is meant to delete any $\pub$ tuple after its $\auth$ tuple has been deleted, intuitively saying that if an author of a paper was deleted, then her associated papers should be deleted as well.
We have $\Delta_{\pub}(p,t)$ in the head of the rule and in the body we have the atom $\pub(p,t)$ to make sure the deleted tuple exists in the database and we have a join between the $\pub$ atom and the $\Delta_{\auth}(a,n)$ atom using the atom $\writes(a,p)$. 
\end{example}

Overloading notation, we shall use $\Delta$ also as a mapping from any subset of tuples in $\mathbf{R}$ to $\mathbf{\Delta}$ in the instance $D$. For instance, for two tuples from $R_1, R_2$ as $S = \{R_1(a), R_2(b)\}$, we will use $\Delta(S)$ to denote $\Delta_1(a)$ and $\Delta_2(b)$ suggesting that these two tuples have been deleted.

\cut{
we define the operator $\Delta$ to map between a set of original database tuples and their delta counterparts. 
\sudeepa{add a line why we need this mapping -- do not follow this well}

\begin{definition}\label{def:deltarule}
	Given schema $\mathbf{R} = (R_1, \ldots, R_k)$, delta relations $\mathbf{\Delta} = (\Delta_1, \ldots, \Delta_k)$, and a database instance $D$, we define the mapping $\Delta:D \to D$ such that $\Delta(R_i(\attrset_i)) = \Delta_i(\attrset_i)$. 
\end{definition}

Note that the function $\Delta$ is injective, so we can also denote $\Delta^{-1}(\Delta_i(\vec{C})) = R_i(\vec{C})$. 
As standard, we can also consider the image of $\Delta$ for a subset $S\subseteq D$ of tuples in the database $\Delta(S) = {\{\Delta_i(\attrset_i) ~|~ R_i(\attrset_i) \in S\}}$, $\Delta^{-1}(\Delta(S)) = S$.

\begin{example}
If we have a set of tuples $S = \{R_1(a), R_2(b)\} \subseteq D$, then $\Delta(S) = \{\Delta_1(a), \Delta_2(b)\}$. Additionally, $\Delta^{-1}(\{\Delta_1(a),$ $\Delta_2(b)\}) = S$. \sudeepa{intuition not clear, also such example would need another toy example -- can you use the running example}
\end{example}
}

\subsection{Independent Semantics}\label{sec:indep}
This non-operational `ideal' semantics captures the intuition of a minimum-size repair: the smallest set of tuples that need to be removed so that all the constraints are satisfied. Note that whenever we delete a tuple, we add the corresponding delta tuple. Hence the following definition:




\begin{definition}\label{def:indsem}
Let $D$ and $P$ respectively be a database instance and a delta program over the schema $\mathbf{R},  \Delta$.   The result of independent semantics, denoted $Ind(P,D)$, is the smallest subset of non-delta tuples 
$S\subseteq D$ 
such that in $(D\setminus S) \cup \Delta(S)$ there is no satisfying assignment for any rule of $P$. 
\end{definition}
Note that there may be multiple minimum size sets satisfying the criteria, in which case the independent semantics will non-deterministically output one of them. 
Proposition \ref{prop:basic} shows that there is always a result for this semantics.

\begin{figure}[!htb]
    \centering \tiny
    \begin{minipage}{.35\linewidth}
        \centering
        \caption*{$\fnd$}\label{tbl:grant_changed}
        \begin{tabular}{ c | c | c | c | c | c |}
          \cline{2-3}  & fid & name \\
            \hline $\f_1$ & 1 & NSF \\
            \rowcolor{lightgray}
            \hline $\f_2$ & 2 & ERC \\
            \hline
        \end{tabular}
    \end{minipage}%
    \begin{minipage}{.3\linewidth}
        \centering
        \caption*{ $\adv$}\label{tbl:authgrant_changed}
        \begin{tabular}{ c | c | c | c | c | c |}
          \cline{2-3}  & aid & fid \\
            \hline $\ad_1$ & 2 & 1 \\
            \rowcolor{Cyan}
            \hline $\ad_2$ & 4 & 2 \\
            \rowcolor{Cyan}
            \hline $\ad_3$ & 5 & 2 \\
            \hline
        \end{tabular}
    \end{minipage}%
    \begin{minipage}{.35\linewidth}
        \centering
        \caption*{ $\auth$}\label{tbl:author_changed}
        \begin{tabular}{c | c | c | c | c | c |}
            \cline{2-3} & aid & name \\
            \hline $\au_1$ & 2 & Maggie \\
            \rowcolor{YellowGreen}
            \hline $\au_2$ & 4 & Marge \\
            \rowcolor{YellowGreen}
            \hline $\au_3$ & 5 & Homer \\
            \hline
        \end{tabular}
    \end{minipage}
    
    \begin{minipage}{.33\linewidth}
        \centering
        \caption*{$\cites$}\label{tbl:cite_changed}
        \begin{tabular}{c | c | c | c | c | c |}
           \cline{2-3}  & citing & cited\\
           \rowcolor{Apricot}
            \hline $\cc$ & 7 & 6 \\
            \hline
        \end{tabular}
    \end{minipage}%
    \begin{minipage}{.33\linewidth}
        \centering
        \caption*{$\writes$}\label{tbl:writes_changed}
        \begin{tabular}{c | c | c | c | c | c |}
           \cline{2-3}  & aid & pid\\
            \rowcolor{YellowGreen}
            \hline $\ww_1$ & 4 & 6 \\
            \rowcolor{YellowGreen}
            \hline $\ww_2$ & 5 & 7 \\
            \hline
        \end{tabular}
    \end{minipage}%
     \begin{minipage}{.33\linewidth}
        \centering
        \caption*{$\pub$}\label{tbl:publication_changed}
        \begin{tabular}{c | c | c | c | c | c |}
           \cline{2-3}  & pid & title\\
            \rowcolor{Lavender}
            \hline $\pp_1$ & 6 & x \\
            \rowcolor{Lavender}
            \hline $\pp_2$ & 7 & y \\
            \hline
        \end{tabular}
    \end{minipage}
    \caption{\amir{R1,D3/R2,D5: removed fig. (b) and revised text according to example} {\common The database instance $D$ after applying the rules in Figure \ref{fig:deltaprog} with the different semantics (not showing delta relations). The tuple $\f_2$ is always deleted and added to the delta relation. Tuples of a certain color are deleted from the original relations and added to the delta relations. (1)
    Independent semantics deletes the gray and cyan tuples. (2) Step semantics deletes the gray and green tuples. (3) Stage semantics deletes the gray, green and pink tuples. (4) End semantics deletes the gray, green, pink, and orange tuples and adds them to the delta relations}}\label{fig:indres}
    \vspace{-4mm}
\end{figure}
\begin{example}\label{ex:ind}
Consider the database in Figure \ref{fig:dblp} and the rules shown in Figure \ref{fig:deltaprog}. The result of independent semantics is $\{\f_2,\ad_2,\ad_3\}$ and the final state of the database appears in Figure \ref{fig:indres}, where the gray and cyan colored tuples are deleted from the original relations and added to the delta relations. 
Note that in the state depicted in Figure \ref{fig:indres}, there are no satisfying assignments to any of the rules in Figure \ref{fig:deltaprog}.
\end{example}

\subsection{Step Semantics}\label{sec:step}
This semantics offers a non-deterministic fine-grain rule activation similar to 
the fact-at-a-time semantics for datalog in the presence of functional dependencies \cite{AbiteboulBD12,AbiteboulDV14}. 
%
We denote the \emph{state} of the database at step $t$ by $D^t = \{R_i^t\}, \Delta^t = \{\Delta_i^t\}, i = 1$ to $m$, and inductively define step semantics as follows.

\begin{definition}\label{def:stepsem}
Let $D$ and $P$ be a database and a delta program over the schema $\mathbf{R} \cup \mathbf{\Delta}$ respectively. 
According to step semantics, at step $t=0$, we have $\Delta_i^{t} = \emptyset$ and $R_i^{t} = $ the relation $R$ in $D$. 
For each step $t>0$, make a non-deterministic choice of an assignment $\alpha: body(r) \to D^t$ to a rule $r\in P$ such that  $head(r) = \Delta_i(\vec{X})$, $tup = \alpha(head(r))$, and update $\Delta_i^{t+1} \gets \Delta_i^t \cup \{tup\}$, and $R_i^{t+1} \gets R_i^{t} \setminus \Delta_i^{t+1}$. For $j \neq i$, $\Delta_j^{t+1} \gets \Delta_j^t$, and $R_j^{t+1} \gets R_j^{t}$. The result of step semantics 
$Step(P,D)$ 
is a minimum size set of non-delta tuples $S$, such that $S = D^0 \setminus D^t$ and $D^{t} = D^{t+1}$. 
\end{definition}

\amir{R2,D6:} {\revb The result of step semantics is then the minimum possible number of tuples that are deleted by a sequence of single rule activations. If there is more than one sequence that results in a minimum number of derived delta tuples, step semantics non-deterministically outputs one of the sets of tuples associated with one of the sequences.}
Step semantics has two uses: (1) simulate a subset of SQL triggers (``delete after delete'') to determine the logic in which they will operate in case there is a need for each trigger to operate separately and immediately update the table from which it deleted a tuple and then evaluate whether another trigger needs to operate (similar to row-by-row semantics, but for multiple triggers), and (2) DC-like semantics can also be simulated with this semantics (see paragraph at the end of this section).

\begin{example}\label{ex:step}
Reconsider the database in Figure \ref{fig:dblp} and the rules in Figure \ref{fig:deltaprog}. \amir{R2,D6:} {\revb We will now demonstrate a sequence of rule activations that results in the smallest set of deleted tuples, which is the result of step semantics.}
\begin{compactenum}
\item At step $t=1$, there is one satisfying assignments to rule (0) that derives $\Delta(\f_2)$. We update $\Delta^1_{\fnd} = \{\f_2\}$, $\fnd^1 = \{\f_1\}$
\item At step $t=2$, there are two satisfying assignments to rule (1). We choose the assignment to rule (1) deriving $\Delta(\au_2)$, and update $D^1$ so it includes the change $\Delta^2_{\auth} = \{\au_2\}$, $\auth^2 = \{\au_1, \au_3\}$
\item In step $t=3$, we have three satisfying assignments: to rules (1), (2), and (3). Suppose we choose the one satisfying rule (1) and derive $\Delta(\au_3)$. $D^2$ is now updated such that $\Delta^3_{\auth} = \{\au_2, \au_3\}$, $\auth^3 = \{\au_1\}$
\item In step $t=4$, there are two assignments to rule (2) and two to rule (3). We choose the one deriving $\Delta(\ww_1)$ and update $D^3$ with $\Delta^4_{\writes} = \{\ww_1\}$, $\writes^4 = \{\ww_2\}$. Note that in the next step, the assignment to rule (3) deriving $\Delta(\pp_1)$ will not be possible due to this update
\item In step $t=5$, there is an assignment to rule (2) and two assignments to rule (3). We choose the one deriving $\Delta(\ww_2)$ and update $D^4$ with $\Delta^5_{\writes} = \{\ww_1, \ww_2\}$, $\writes^5 = \emptyset$. 
\end{compactenum}
The result for this example is depicted in Figure \ref{fig:indres} where the gray and green tuples are deleted from the original relations and added to the delta relations. 

\end{example}

\subsection{Stage Semantics}\label{sec:stage}
Stage semantics separates the evaluation process into stages so at each stage we employ all satisfying assignments to derive \emph{all possible tuples}, and update the delta relations and the original relations (after all possible tuples are found). 
At each stage $t$ of evaluation (similarly to the semin\"aive algorithm \cite{AbiteboulHV95}), we compute all tuples for $\Delta_i$ relations and update the relations $R_i$ in this stage by $R_i^{t} = R_i^{t-1} \setminus \Delta_i^t$. 

\begin{definition}\label{def:stagesem}
Let $D$ and $P$ be a database and a delta program over the schema $\mathbf{R} \cup \mathbf{\Delta}$, respectively. 
According to stage semantics, at stage $t=0$, $\Delta_i^{t} = \emptyset$ and $R_i^{t}$ is the relation $R_i$ in $D$. 
For each stage $t>0$, $\Delta_i^t \gets \Delta_i^{t-1} \cup \{tup ~|~ tup = \alpha(head(r)), r\in P, \alpha[body(r)] \in D^{t-1}, \alpha:body(R)\to D^{t-1}\}$, and $R_i^t \gets R_i^{t-1} \setminus \Delta_i^{t}$. The result of stage semantics, denoted $Stage(P,D)$, is the set of non-delta tuples $S$, such that $S = D^0 \setminus D^{t}$ and $D^{t} = D^{t+1}$. 
\end{definition}

This semantics can be used to simulate a subset of SQL triggers to determine the logic in which they will operate in case there is a need for several stages of deletions of tuples, i.e., the triggers lead to a cascade deletion.

\begin{example}\label{ex:stage}
Reconsider the database in Figure \ref{fig:dblp} and the rules in Figure \ref{fig:deltaprog}. 
Assume we want to perform cascade deletion through triggers such that a deletion of the $\auth$ tuple including the $\fnd$ tuple including $ERC$ will delete its recipients' $\auth$ tuples, and the latter will result in the deletion of the associated $\writes$ and $\pub$ tuples. 
The following describes the operation of stage semantics simulating this process:
\begin{compactenum}
\item At the first stage, there is one assignments to rule (0) deriving $\Delta(\f_2)$, we update $\Delta_{\fnd} = \{\f_2\}$, $\fnd = \{\f_1\}$
\item At the second stage, we use the two assignments to rule (1) to derive $\Delta(\au_2)$ and $\Delta(\au_3)$. We update the database so that $\auth = \{\au_1\}$, $\Delta_{\auth} = \{\au_2,\au_3\}$
\item In the next stage, we use the two assignments to rule (2) and the two assignments to rule (3) to derive $\Delta(\pp_1)$, $\Delta(\pp_2)$, $\Delta(\ww_1)$ and $\Delta(\ww_2)$, and update the database as $\writes = \emptyset$, $\pub = \emptyset$, $\Delta_{\writes} =  \{\ww_1,\ww_2\}$, $\Delta_{\pub} =  \{\pp_1,\pp_2\}$
\end{compactenum}
For any stage $>3$, the state of the database will be identical, so this is the result of stage semantics, shown in Figure \ref{fig:indres} where the tuples in gray, green, and pink are deleted from the original relations and added to the delta relations.
\end{example}

 
Since the delta relations are monotone and can only be as big as the base relation, we can show the following (for brevity, the formal proofs are deferred to the full version).

  \begin{proposition}\label{prop:fixpointstage}
 Let $\mathbf{R}$ be a relational schema. For every database and delta program over $\mathbf{R}$, stage semantics will converge to a unique fixpoint.
 \end{proposition}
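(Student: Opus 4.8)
The plan is to exploit two structural facts about stage semantics and combine them with finiteness. The delta relations only grow from stage to stage, and they are bounded by the (finite) base relations; together these force the sequence of states to stabilize. Uniqueness will then follow from the observation that each stage derives \emph{all} currently-derivable tuples rather than making any non-deterministic choice, so the whole trajectory is a deterministic function of the input.

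First I would establish \emph{monotonicity} of the delta relations. By Definition~\ref{def:stagesem}, $\Delta_i^t = \Delta_i^{t-1}\cup\{\ldots\}$, so $\Delta_i^{t-1}\subseteq\Delta_i^t$ for every relation $i$ and every stage $t$; the sequence $(\Delta_i^t)_t$ is nondecreasing under inclusion. An easy induction then gives the identity $R_i^t = R_i^0\setminus\Delta_i^t$: indeed $R_i^t = R_i^{t-1}\setminus\Delta_i^t = (R_i^0\setminus\Delta_i^{t-1})\setminus\Delta_i^t = R_i^0\setminus\Delta_i^t$, using the induction hypothesis and $\Delta_i^{t-1}\subseteq\Delta_i^t$. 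Hence the full state $D^t$ is determined by the tuple of delta relations alone. Next I would prove \emph{boundedness}: by Definition~\ref{def:deltarule} every delta rule with head $\Delta_i(\vec{X})$ contains the atom $R_i(\vec{X})$ in its body with the \emph{same} vector $\vec{X}$, so any tuple placed into $\Delta_i^t$ has the form $\alpha(\Delta_i(\vec{X}))$ where $\alpha(R_i(\vec{X}))\in R_i^{t-1}\subseteq R_i^0$. Thus $\Delta_i^t\subseteq R_i^0$ for all $t$. Each $(\Delta_i^t)_t$ is therefore a nondecreasing sequence of subsets of a fixed finite set and must stabilize; taking the largest stabilization stage over the finitely many relations yields a $t^\ast$ with $\Delta_i^{t^\ast}=\Delta_i^{t^\ast+1}$ for all $i$ simultaneously. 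By the identity $R_i^t = R_i^0\setminus\Delta_i^t$ this also gives $R_i^{t^\ast}=R_i^{t^\ast+1}$, i.e. $D^{t^\ast}=D^{t^\ast+1}$, so a fixpoint is reached.

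For \emph{uniqueness} I would note that stage semantics is deterministic: the increment at stage $t$ is the set of \emph{all} tuples $\alpha(head(r))$ for which $r\in P$ and $\alpha\colon body(r)\to D^{t-1}$ is a valid assignment, and this set is independent of any ordering of rules or assignments. Consequently the sequence $D^0,D^1,D^2,\ldots$, and therefore its stable value, is a function of $(P,D)$ alone, which is exactly the claimed unique fixpoint. The part requiring the most care is the monotonicity claim, and this is where I expect the only real subtlety: since the base relations $R_i^t$ \emph{shrink} as tuples are deleted, one might fear that a delta tuple derivable at one stage becomes underivable later and is lost, breaking monotonicity. The resolution is that the definition accumulates delta tuples by union, so once a tuple enters $\Delta_i$ it remains there regardless of whether its original derivation is still valid; this is precisely what keeps $(\Delta_i^t)_t$ nondecreasing and lets the finite bound $\Delta_i^t\subseteq R_i^0$ close the convergence argument.
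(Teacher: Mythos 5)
Your proposal is correct and follows essentially the same route as the paper: uniqueness from the determinism of deriving \emph{all} currently-derivable delta tuples at each stage, and convergence from monotonicity plus finiteness (the paper phrases this as the base relations monotonically shrinking, you phrase it as the delta relations monotonically growing within the finite bound $\Delta_i^t\subseteq R_i^0$ via the $R_i(\vec{X})$ body atom, which are two views of the same fact given your identity $R_i^t=R_i^0\setminus\Delta_i^t$). Your version is somewhat more explicit about the bookkeeping, but there is no substantive difference in the argument.
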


\amir{R1,D3:}
 \begin{proof}[Proof Sketch]
 {\reva 
 As stage semantics is rule-order independent and deterministic, at stage $t$ we add all the $\Delta_i$ tuples that can be derived from $D^t$ to get $\Delta_i^{t+1}$, and further delete all the tuples in $\Delta_i^{t+1}$ from $R_i^t$ to get $R_i^{t+1}$. 
 Furthermore, the number of tuples with relations in $\mathbf{R}$ is monotonically decreasing. Thus, there exists a stage in which no more tuples with these relations who satisfy the rules exist. This is the stage that defines the fixpoint.
  }
 \end{proof}

\subsection{End Semantics}\label{sec:end}
Finally, as a baseline, we define end semantics following standard datalog evaluation of delta relations.

\begin{definition}\label{def:endsem}
Let $D$ and $P$ be a database and a delta program over the schema $\mathbf{R}\cup \mathbf{\Delta}$. 
For $t=0$, we have $\Delta_i^{t} = \emptyset$ and $R_i^{t}$ is the relation $R$ in $D$. 
According to end semantics, at each state $t > 0$, $R_i^t \gets R_i^{0}$, and 
$\Delta_i^t \gets \Delta_i^{t-1} \cup \{tup ~|~ tup = \alpha(head(r)), r\in P, \alpha[body(r)] \in D^{t-1}, \alpha:body(R)\to D^{t-1}\}$. Denote the fixpoint of this semantics as $T$, i.e., $D^{T} = D^{T+1}$
At state $T$, $R_i^T \gets R_i^{0} \setminus \Delta_i^{T-1}$, $\Delta_i^T \gets \Delta_i^{T-1}$. The result of end semantics 
$End(P,D)$ 
is the set of non-delta tuples $S = D^0 \setminus D^{T}$.
\end{definition}

This is the standard datalog semantics in the sense that it treats the relations in $\mathbf{\Delta}$ as regular intensional relations and only updates them during the evaluation. Once the evaluation process is completed, the relations in $\mathbf{R}$ are updated. 

\begin{example}\label{ex:end}
For the database and rules in Figures \ref{fig:dblp} 
and \ref{fig:deltaprog}, 
all possible delta tuples will be derived using the rules as shown in Example \ref{ex:fixpont_end}, i.e, $\{\Delta(\f_2), \Delta(\au_2),\Delta(\au_3),\Delta(\ww_1),\Delta(\ww_2),\Delta(\pp_1),$ $\Delta(\pp_2),$ $\Delta(\cc)\}$. 
Then, after the derivation process is done, the tuples $\{\f_2,\au_2,\au_3,\ww_1,\ww_2,\pp_1,\pp_2,\cc\}$ will be deleted, to get the database appearing in Figure \ref{fig:indres} where the gray, green, pink and orange colored tuples are deleted from the original relations and added to the delta relations.
\end{example}

As end semantics is closely related to datalog evaluation, it inherits the basic property of converging to a unique fixpoint.


\subsection{Stabilizing Sets and Problem Statement}\label{sec:stablizing}
After defining delta programs, we introduce the notion of a {\em stable database} with respect to a delta program.

\begin{definition}\label{def:stabilizing}
Given a database $D$ over a schema $\mathbf{R} \cup \mathbf{\Delta}$, and a delta program $P$, $D$ is a stable database w.r.t $P$ if $\{\alpha(head(r)) ~|~ r\in P, \alpha:body(r)\to D, \alpha(body(r))\in D\} = \emptyset$, i.e., $D$ does not satisfy any rule in $P$.
\end{definition}

\begin{example}\label{ex:stable}
Reconsider the database in Figure \ref{fig:dblp} and the rules in Figure \ref{fig:deltaprog}. If we remove the tuples included in the result of end semantics in Example \ref{ex:end} and add their corresponding delta tuples, we would have a stable database.
\end{example}


Alternatively, we can say that a stable database w.r.t. a delta program is a database where no delta tuples can be generated. A database is said to be unstable if it is not stable. 






\begin{definition}\label{def:stabilizingdep}
Given an unstable database $D$ w.r.t a delta program $P$ over a schema $\mathbf{R}\cup  \mathbf{\Delta}$, a stabilizing set for $D$ is a set of tuples $S$ such that $(D\setminus S) \cup \Delta(S)$ is stable. 
\end{definition}

\begin{example}\label{ex:minstab}
Returning to Example \ref{ex:stable}, 
a stabilizing set would be $S = \{\f_2, \au_2, \au_3, \ww_1, \ww_2, \pp_1, \pp_2, \cc\}$, as the database without these tuples and with the tuples in $\Delta(S)$ does not satisfy any of the rules in Figure \ref{fig:deltaprog}.
\end{example}








Our objective is to study the relationship between the results of all semantics we have defined, study the complexity of finding them, and devise efficient algorithms for this purpose.  

\begin{definition}[Problem Definition]\label{def:minstabilizing}
Given $(D, P, \sigma)$, where $D$ is a database and $P$ is a delta program over schema $\mathbf{R}, \mathbf{\Delta}$, and $\sigma$ is a semantics, the desired solution is the result of $\sigma$ w.r.t. $D$ and $P$, denoted by $\sigma(D,P)$.
\end{definition}

\textbf{Initialization of the database and the deletion process.~}
\amir{R2,D14}
{\revb
The deletion process can start in two ways. When the given database contains tuples that violate the constraints expressed by the delta program. This is a popular scenario for data repair. 
Another scenario is where the initial database is stable and the user wants to delete a specific set of tuples. At start, we assume $\Delta_i = \emptyset$ for all $i$.
To start the deletion process, we add a rule for each tuple $R_i(\vec{C})$ of the form $\Delta_i(\vec{C}) \dl R_i(\vec{C})$. 

\begin{example}\label{ex:init}
Consider a slightly different schema than the database in Figure \ref{fig:dblp} where the \pub\ table also mentions the conference in which each paper was published and the delta rule $\Delta_{\pub}(p_1,t_1,conf_1):-\pub(p_1,t_1,conf_1),\pub(p_2,t_1,conf_2)$ stating that no two papers with the same title can be in published in two two different conferences. An unstable database with two tuples $\pub(1,X,C_1)$ and $\pub(1,X,C_2)$ will violate this rule and start the deletion process. 
In our running example, however, we would like to start the deletion process by deleting the tuple $\f_2$, and for this we have defined rule (0) in Figure \ref{fig:deltaprog}.
\end{example}
}



We can observe the following:
\begin{proposition}\label{prop:basic}
Given a database $D$, a delta program $P$, and a semantics $\sigma$, both $D$ and $\sigma(P, D)$ are always stabilizing sets under all four semantics, where $\sigma(P, D)$ is the result of $\sigma$ given $P$ and $D$. In other words, a stabilizing set always exists. 
\end{proposition}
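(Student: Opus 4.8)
The plan is to hang everything on a single structural feature of delta rules: by Definition~\ref{def:deltarule}, a rule $r$ with head $\Delta_i(\vec{X})$ necessarily contains the atom $R_i(\vec{X})$ in its body, with the \emph{same} variable vector $\vec{X}$. Hence any assignment $\alpha$ satisfying $body(r)$ in a given state must send $R_i(\vec{X})$ to a tuple $R_i(\alpha(\vec{X}))$ that is still present in the base relation, and the head it would derive is exactly $\Delta_i(\alpha(\vec{X}))$, the delta counterpart of that surviving tuple. This ``guard atom'' is the lever for the whole proof: a rule can fire to delete $R_i(t)$ only while $R_i(t)$ has not yet been removed. I would state this observation first and reuse it in every case.

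Next I would dispatch the claim that $D$ itself (i.e.\ $S=D$) is a stabilizing set. Taking $S=D$ produces the state $(D\setminus D)\cup\Delta(D)$ in which every base relation $R_i$ is empty, so by the guard-atom observation no rule body can be satisfied; the state is stable, and $D$ is a stabilizing set in the sense of Definition~\ref{def:stabilizingdep}. The important consequence is that the family of stabilizing subsets of $D$ is nonempty, which is precisely what makes the ``smallest such $S$'' in Definition~\ref{def:indsem} well defined. Thus $Ind(P,D)$ exists, and since it renders every rule unsatisfiable by definition, it is a stabilizing set for free.

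For step and stage semantics I would first record the invariant $R_i^t = R_i^0 \setminus \Delta_i^t$, proved by a short induction on $t$ from the update rules together with the monotone growth of each $\Delta_i$. At a terminating state $D^t = D^{t+1}$, suppose some rule $r$ (head $\Delta_i(\vec{X})$) had a satisfying assignment $\alpha$ with $R_i(\alpha(\vec{X})) \in R_i^t$. The invariant gives $\Delta_i(\alpha(\vec{X})) \notin \Delta_i^t$, so firing $r$ (step), or the next parallel round (stage), would insert a genuinely new delta tuple and delete its base counterpart, forcing $D^{t+1}\neq D^t$ --- a contradiction. Hence $D^t$ is stable, and since $D^t = (D\setminus S)\cup\Delta(S)$ for $S = \sigma(P,D)$, the result is a stabilizing set; well-definedness of the stage fixpoint is Proposition~\ref{prop:fixpointstage}.

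The main obstacle is end semantics, because the base relations are \emph{not} updated during derivation --- they are held at $R_i^0$ and reduced only in the final state --- so the invariant above does not apply. Here I would argue as follows. Let the derivation converge with delta relations $\Delta_i^{T-1}$ (in the notation of Definition~\ref{def:endsem}), and let the final state set $R_i^T \gets R_i^0 \setminus \Delta_i^{T-1}$. Suppose a rule $r$ has a satisfying assignment $\alpha$ in the final state. Since shrinking a relation can only remove satisfying assignments, $\alpha$ also satisfies $body(r)$ in the derivation-fixpoint state: every base atom holding in $R_k^0\setminus\Delta_k^{T-1}$ holds a fortiori in $R_k^0$, and the delta atoms are unchanged. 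Therefore $\Delta_i(\alpha(\vec{X}))$ was derivable at the derivation fixpoint and lies in $\Delta_i^{T-1}$. But the guard atom forces $R_i(\alpha(\vec{X})) \in R_i^0\setminus\Delta_i^{T-1}$, i.e.\ $\Delta_i(\alpha(\vec{X}))\notin\Delta_i^{T-1}$ --- a contradiction. So the final end-semantics state is stable and $End(P,D)$ is a stabilizing set. Collecting the four cases, with $S=D$ already witnessing existence, completes the proof.
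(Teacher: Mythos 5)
Your proof is correct and follows essentially the same route as the paper's, which simply observes that $S=D$ empties every base relation (so no rule body can be satisfied) and that each semantics by definition terminates in a state satisfying no rule. Your version is more careful on two points the paper glosses over --- using the guard atom $R_i(\vec{X})$ to rule out firings in the state $\Delta(D)$, where the delta relations are nonempty, and actually verifying stability of the final end-semantics state via monotonicity, which is not literally immediate from Definition~\ref{def:endsem} --- but these are elaborations of the same argument rather than a different approach.
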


\amir{R2,D11:} {\revb Intuitively, 
if the database is stable, a stabilizing set is the empty set. Otherwise, the entire database is a stabilizing set. 
Additionally, the result of each semantics is defined as the set of non-delta tuples $S$ such that $(D\setminus S) \cup \Delta(S)$ is stable.
Note that sometimes these sets and the results of the different semantics are identical. E.g., if there is only one tuple in the database and one delta rule that deletes it, then this tuple forms the unique stabilizing set and will be returned by 
all semantics. }
Moreover, the results of independent and step semantics may not be unique:


\cut{
\begin{corollary}\label{prop:exists}
Given a delta program $P$ and an unstable database $D$ w.r.t $P$, there exists a stabilizing set.
\end{corollary}
}


\begin{proposition}\label{prop:not_unique}
There exist a database $D$ and a delta program $P$ such that there are two possible results for independent and step semantics.
\end{proposition}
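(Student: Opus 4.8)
The plan is to prove this existence statement by exhibiting a single small instance that simultaneously witnesses non-uniqueness for both semantics, rather than arguing abstractly. I would take the schema to consist of two unary relations $A$ and $B$, the database $D = \{A(1), B(1)\}$, and the delta program $P$ consisting of the two symmetric rules $\Delta_A(x) \dl A(x), B(x)$ and $\Delta_B(x) \dl B(x), A(x)$. Intuitively, each rule fires precisely when both $A(1)$ and $B(1)$ are present, so the conflict is resolved by deleting either one of the two tuples, and the construction is deliberately symmetric in $A$ and $B$.

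First I would verify that $D$ is unstable: under the assignment $x \mapsto 1$ both rule bodies are satisfied, so $\Delta_A(1)$ and $\Delta_B(1)$ are derivable and $D$ is not stable in the sense of Definition \ref{def:stabilizing}. Next, for the independent semantics I would check that the two singletons $S_A = \{A(1)\}$ and $S_B = \{B(1)\}$ are both stabilizing sets: in $(D \setminus S_A) \cup \Delta(S_A)$ the only surviving base tuple is $B(1)$, and since neither rule body can be satisfied without an $A$-atom, no rule has a satisfying assignment; the argument for $S_B$ is identical. Since $D$ itself is unstable, the empty set is not stabilizing, so the minimum stabilizing-set size is exactly $1$; as $S_A \neq S_B$ are two distinct minimum stabilizing sets, $Ind(P,D)$ is non-unique per Definition \ref{def:indsem}.

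For the step semantics I would trace the execution from Definition \ref{def:stepsem}. At step $t=1$ exactly two assignments are available, one deriving $\Delta_A(1)$ and one deriving $\Delta_B(1)$; firing the former sets $A^2 = \emptyset$ and $B^2 = \{B(1)\}$, while firing the latter sets $B^2 = \emptyset$ and $A^2 = \{A(1)\}$. In either case a single base tuple remains, no rule body can be satisfied, and a fixpoint is reached immediately, yielding the deleted sets $\{A(1)\}$ and $\{B(1)\}$ respectively. Both have size $1$, and no firing sequence can reach a fixpoint with fewer deletions, since $D^0 = D$ is unstable and forces at least one firing; hence both are minimum-size results and $Step(P,D)$ also admits two distinct outcomes.

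This reduces the whole claim to routine verification against the definitions, so I do not expect a genuine obstacle. The only point requiring care is confirming that deleting one tuple removes every satisfying assignment of both rules --- which holds because each rule body contains one $A$-atom and one $B$-atom, so the loss of either relation's single tuple is fatal to all assignments --- and confirming that the empty set fails, which rules out any size-$0$ solution and pins the minimum at $1$ for both semantics.
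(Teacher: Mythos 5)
Your proposal is correct and follows essentially the same route as the paper: the paper's own proof exhibits the two-tuple database $\{R_1(a), R_2(b)\}$ with two symmetric rules whose bodies coincide but whose heads target different relations, so that either singleton is a valid minimum result under both independent and step semantics. Your instance is the same construction up to renaming (and a shared join variable instead of two distinct ones), with the verification steps spelled out more explicitly than in the paper.
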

\amir{R2,D12 (fixed):} To see this, consider the database $D = \{R_1(a), R_2(b)\}$ and a program with two rules (1) $\Delta_1(x) :- R_1(x), R_2(y)$, and {\revb (2) $\Delta_2(y) :- R_1(x), R_2(y)$}. 
For independent and step semantics, there are two equivalent solutions: $\{R_1(a)\}$ derived from rule (1), or $\{R_2(b)\}$ derived from rule (2).

\amir{R2,D1:}
{\bf Expressiveness of delta rules.~}
We discuss some forms of constraints that are captured by delta rules.
DCs \cite{ChomickiM05} can be written as a first order logic statement: $\forall \vec{x_1}, \ldots, \vec{x_m} ~ \neg(R_1(\vec{x_1}), \ldots, R_m(\vec{x_m}), \varphi(\vec{x_1}, \ldots, \vec{x_m}))$. 
$\varphi(\vec{x_1}, \ldots, \vec{x_m})$ is a conjunction of atomic formulas of the form $R_i[A_k] \circ R_j[A_l]$, $R_i[A_k] \circ \alpha$, where $\alpha$ is a constant, and $\circ \in \{<, >, =, \neq, \leq, \geq\}$.
Given a DC, $C$, of this form, we translate it to the following delta rule:
\begin{scriptsize}
\begin{lstlisting}
$\Delta_{1}(\vec{x_1}) :- {R_1}(\vec{x_1}), \ldots, {R_m}(\vec{x_m}), \{A^i_k\circ A^j_l ~|~ R_i[A_k] \circ R_j[A_l] \in C\},$ $\{A^i_k\circ \alpha ~|~ R_i[A_k] \circ \alpha \in C\}$
\end{lstlisting}
\end{scriptsize}
The first part of the body contains the atoms used in $C$, the second part contains the comparison between different attributes in $C$ and the third contains the comparison between a attribute and a constant in $C$. 
For independent semantics, the head of the rule can be any delta atom $\Delta_i(\vec{x_i})$. 
$Ind(P,D)$ will then be the smallest set of tuples that should be deleted such that the rule is not satisfied, i.e., from each set of tuples that violate $C$, at least one tuple will be deleted. I.e., $Ind(P,D)$ will be the smallest set of tuples that needs to be deleted such that the database will comply with $C$.
Step semantics can also mimic this by adding a rule for each atom in the rule corresponding to the $C$. We will have $m$ rules and each will have as a head one of the atoms participating in the DC. 
Thus, for each set of of tuples violating $C$, we have a set of $m$ rules allowing us to delete any tuple from this set. Note that in both $Ind(P,D)$ and $Step(P,D)$, only one tuple from the violating set would be deleted. 

Similarly, we can show that delta rules, along with the appropriate semantics, can express Domain Constraints \cite{DeutchF19}, ``after delete, delete'' SQL Triggers \cite{DBSystems0020812}, and Causal Rules without recursion \cite{RoyS14} (whose syntax inspired delta rules).

\amir{R3,D4:} We now compare the results obtained from the semantics in terms of set containment and size.
\begin{proposition}\label{prop:replationships}
Given a database $D$ and a delta program $P$, we have:
\begin{compactenum}
    \item $|Ind(P,D)| \leq |Step(P,D)|, |Stage(P,D)|$, and there is a case where $|Ind(P,D)| < |Step(P,D)|, |Stage(P,D)|$
     \item $Stage(P,D) \subseteq End(P,D)$, and there is a case where $Stage(P,D) \subsetneq End(P,D)$
     \item $Step(P,D) \subseteq End(P,D)$ , and there is a case where $Step(P,D) \subsetneq End(P,D)$
     \item There exists cases where $Step(P,D) \subsetneq Stage(P,D)$ and cases where $Stage(P,D) \subsetneq Step(P,D)$
 \end{compactenum}
\end{proposition}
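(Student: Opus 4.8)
The plan is to prove each of the four claims in Proposition~\ref{prop:replationships} separately, establishing the inequalities/containments first and then supplying a witnessing instance (which in fact the running example of Figures~\ref{fig:dblp} and~\ref{fig:deltaprog} already provides, since Example~\ref{ex:four} exhibits all four semantics with distinct sizes).

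\textbf{Claim (1): $|Ind(P,D)| \le |Step(P,D)|, |Stage(P,D)|$.} The key observation is that both $Step(P,D)$ and $Stage(P,D)$ are themselves stabilizing sets: by Proposition~\ref{prop:basic} the terminal state $D^t$ reached under either semantics is stable, so $S = D^0 \setminus D^t$ satisfies the condition of Definition~\ref{def:stabilizingdep}, namely that $(D \setminus S) \cup \Delta(S)$ admits no satisfying assignment for any rule. Since $Ind(P,D)$ is by Definition~\ref{def:indsem} the \emph{smallest} subset whose deletion yields a stable database, its cardinality is a lower bound on the size of \emph{any} stabilizing set, in particular $|Ind(P,D)| \le |Step(P,D)|$ and $|Ind(P,D)| \le |Stage(P,D)|$. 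For the strict separation, the running example gives $|Ind| = 3$ versus $|Step| = 4$ and $|Stage| = 7$ (counting $\f_2$), so $|Ind(P,D)| < |Step(P,D)|, |Stage(P,D)|$.

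\textbf{Claims (2) and (3): $Stage(P,D) \subseteq End(P,D)$ and $Step(P,D) \subseteq End(P,D)$.} The unifying idea is that end semantics never re-evaluates the base relations during derivation---by Definition~\ref{def:endsem} it keeps $R_i^t \gets R_i^0$ throughout and only accumulates delta tuples---whereas stage and step semantics delete tuples from the $R_i$ as they go. Deleting tuples only \emph{removes} possible assignments, so any delta tuple derivable under stage or step semantics (where the bodies are matched against a shrunken database) is \emph{a fortiori} derivable under end semantics (where the bodies are matched against the full $D^0$ at every step). Formally I would argue by induction on the stage/step index $t$ that every tuple added to some $\Delta_i^t$ under stage (resp.\ step) semantics also appears in the end-semantics fixpoint $\Delta_i^T$: the base case is $t=0$ where all are empty, and the inductive step uses that an assignment $\alpha$ valid against the smaller current state $D^{t-1}$ is also valid against $D^0 \supseteq R_i^{t-1}$ for the base atoms, with the delta atoms in the body already covered by the induction hypothesis. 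Hence the deleted set is contained in $End(P,D)$. The strict containments again follow from the example: $\cc \in End(P,D)$ but $\cc \notin Stage(P,D)$ and $\cc \notin Step(P,D)$, because once $\pp_1,\pp_2$ are deleted from the base relation the body atoms $\writes(a_1,c), \writes(a_2,p)$ of rule~(4) referencing the authors can no longer all be satisfied, so rule~(4) never fires under the deletion-propagating semantics.

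\textbf{Claim (4): incomparability of $Step$ and $Stage$.} Here I would exhibit two instances. One direction, $Step(P,D) \subsetneq Stage(P,D)$, is already witnessed by the running example, where $Step(P,D) = \{\au_2,\au_3,\ww_1,\ww_2\}$ (plus $\f_2$) is a proper subset of $Stage(P,D) = \{\au_2,\au_3,\ww_1,\ww_2,\pp_1,\pp_2\}$: step semantics, by choosing to delete $\ww_1$ before rule~(2) can fire on $\pp_1$, avoids deleting the publications, whereas stage semantics fires rules~(2) and~(3) simultaneously and deletes both the $\writes$ and the $\pub$ tuples. For the reverse direction I would construct a small instance in which an intermediate tuple survives under stage semantics (because a competing rule deletes its join partner in the same stage) but is deleted under some step-semantics firing order; a minimal two- or three-rule program with a shared body atom, analogous to the gadget in Proposition~\ref{prop:not_unique}, suffices. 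The main obstacle is this last sub-case: the containments (2) and (3) are routine monotonicity inductions and claim (1) is essentially the definition of $Ind$, but pinning down a clean witness for $Stage(P,D) \subsetneq Step(P,D)$ requires care, since step semantics is non-deterministic and returns a \emph{minimum}-size set, so I must ensure that \emph{every} minimizing step sequence still deletes a tuple that stage semantics leaves intact---which means designing the gadget so that the step ordering forcing the extra deletion is in fact the (unique) size-minimizing one.
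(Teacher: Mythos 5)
Your arguments for items (1)--(3) coincide with the paper's: item (1) is exactly the observation that $Step(P,D)$ and $Stage(P,D)$ are stabilizing sets while $Ind(P,D)$ is the minimum one, and items (2)--(3) are the same monotonicity induction (end semantics evaluates bodies against $D^0$, so shrinking the base relations can only remove assignments). Substituting the running example for the paper's bespoke counterexamples is legitimate for the strict versions of (1)--(3) and for the $Step \subsetneq Stage$ half of (4), though your stated reason that $\cc$ escapes stage/step semantics is slightly off: under stage semantics rule (4) fails because the $\writes$ tuples are deleted in the same stage as the $\pub$ tuples, and under step semantics because $\Delta_{\pub}$ is never derived at all in the minimizing sequence --- not because deleting $\pp_1,\pp_2$ blocks the $\writes$ atoms.

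The genuine gap is the $Stage(P,D) \subsetneq Step(P,D)$ direction of item (4), which you correctly identify as the hard sub-case but then leave unconstructed. A gadget ``analogous to Proposition~\ref{prop:not_unique}'' (two rules sharing the body $R_1(x), R_2(y)$ with heads $\Delta_1(x)$ and $\Delta_2(y)$) by itself gives the \emph{opposite} separation: stage fires both rules simultaneously and deletes both tuples, while step deletes only one. To force step to be strictly larger you need downstream rules that are \emph{enabled} by sequential firing and \emph{disabled} by simultaneous firing. The paper's construction takes $D = \{R_1(a), R_2(b), R_3(c_1), \ldots, R_3(c_n)\}$ with the two shared-body rules above plus $\Delta_3(z) \dl R_3(z), \Delta_1(x), R_2(y)$ and $\Delta_3(z) \dl R_3(z), R_1(x), \Delta_2(y)$. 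Under stage semantics both of $R_1(a), R_2(b)$ are deleted in the first stage, so each downstream rule is missing a positive body atom and no $R_3(c_i)$ is ever touched; the result has size $2$. Under step semantics only one upstream rule can fire (its firing disables the other), but the corresponding downstream rule then derives $\Delta_3(c_i)$ for every $i$, so \emph{every} firing order, including the size-minimizing one, deletes all $n$ tuples $R_3(c_i)$ --- which is precisely how the paper resolves the concern you raise about the minimum over nondeterministic sequences. Without exhibiting such a gadget and verifying that the blow-up is unavoidable under all orders, item (4) is only half proved.
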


\cut{
\section{Relationships Between the Semantics Results}\label{sec:compare}
We now analyze and compare the results obtained from the different semantics.
Our first observation will show a substantial difference between the independent semantics and the rest of the semantics with regards to the size of the results.
This observation stems from the fact that independent semantics poses no restrictions on the manner in which the its result is formed. 

\begin{proposition}\label{prop:ind_smallest}
Let $D$ be an unstable database w.r.t a delta program $P$. Let $Sem(P,D)$ denote the result of any other semantics (end, stage, or step), then $|Ind(P,D)| \leq |Sem(P,D)|$ and there exist a database $D$, where $|D| = n+1$ and a program $P$ such that $|Ind(P,D)| = 1$ and $|Sem(P,D)| = n$.
\end{proposition}



We move now to compare between end and stage semantics. 
Interestingly, end semantics and stage semantics are equivalent for standard datalog programs. However, this is not the case for delta programs. 
 
 \begin{proposition}\label{prop:stagecontained}
Given a database $D$ and a delta program $P$, it holds that:
 \begin{enumerate}
     \item $Stage(P,D) \subseteq End(P,D)$
     \item There exists a database and delta program such that $Stage(P,D) \smashoperator{\subset} End(P,D)$
 \end{enumerate}
 \end{proposition}

Observe that in the proof of 2, we require rule dependency. If we did not have different stages in the evaluation process, end and stage semantics would be equivalent, so dependency between the rules is necessary to differ between them.

Similarly we can define the relationship between end and step semantics:

\begin{proposition}\label{prop:stepcontained}
Given a database $D$ and a delta program $P$, it holds that:
\begin{enumerate}
     \item $Step(P,D) \subseteq End(P,D)$
     \item There exists a database and delta program such that $Step(P,D) \smashoperator{\subset} End(P,D)$
 \end{enumerate}
 \end{proposition}


We now compare stage and step semantics in the same manner. However, here, there is no defined subsumption of one result in the other.

 \begin{proposition}\label{prop:stagestep}
It holds that:
\begin{enumerate}
     \item There exists a database and delta program such that $Step(P,D) \smashoperator{\subset} Stage(P,D)$
     \item There exists a database and delta program such that $Stage(P,D) \smashoperator{\subset} Step(P,D)$
 \end{enumerate}
 \end{proposition}



From the previous three propositions, we have the following conclusion of this subsection also shown as a diagram in Figure \ref{fig:mss_classes}. 

\begin{corollary}
Combining the results of Propositions \ref{prop:stagecontained}, \ref{prop:stepcontained} and \ref{prop:stagestep}, given a database $D$ and a delta program $P$, we have:
\begin{enumerate}
    \item $|Ind(P,D)| \leq |Step(P,D)|, |Stage(P,D)|$
     \item $Stage(P,D) \subseteq End(P,D)$
     \item $Step(P,D) \subseteq End(P,D)$
     \item There exists cases where $Step(P,D) \subsetneq Stage(P,D)$ and cases where $Stage(P,D) \subsetneq Step(P,D)$
 \end{enumerate}
\end{corollary}
}
\section{Complexity of finding results}\label{sec:complexity}
We now analyze the complexity of finding the result of each semantics. 


\textbf{End semantics.} We follow datalog-like semantics, so the stabilizing set according to end semantics is unique and defined by the single fixpoint. Therefore, we can utilize the standard datalog semantics, treating relations in $\mathbf{\Delta}$ as intensional and deriving all possible delta tuples from the program. After the evaluation is done, we update the relations in $\mathbf{R}$ by removing from them the delta tuples that have been derived.

\cut{
\begin{proposition}\label{prop:endpoly}
 Given a database $D$ and a delta program $P$, Finding $End(P,D)$ is PTime.
 \end{proposition}
 }
 
 \textbf{Stage semantics.~} Similar to end semantics, 
 for stage semantics, if we evaluate the program over the database, we would arrive at a fixpoint. Here, we apply a different evaluation technique, separating the evaluation into stages. At each stage of evaluation, we derive all possible tuples through satisfied rules, and update the database. We continue in this manner until no more tuples can be derived. Therefore, we have the following proposition: 

\begin{proposition}\label{prop:stagepoly}
 Given a database $D$ and a delta program $P$, computing $End(P,D)$ and $Stage(P,D)$ is PTime in data complexity.
 \end{proposition}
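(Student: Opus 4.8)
The plan is to bound both the number of iterations of the fixpoint computation and the cost of each individual iteration, treating the program $P$ as fixed and measuring everything against $n = |D|$, the number of tuples in the input database. Let $m = |P|$ be the number of delta rules and let $\ell$ be the maximum number of body atoms over all rules of $P$; both are constants in data complexity.

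First I would record the structural fact that keeps the delta relations small. By Definition~\ref{def:deltarule}, every delta rule has the form $\Delta_i(\vec{X}) \dl R_i(\vec{X}), Q_1(\vec{Y}_1), \ldots, Q_l(\vec{Y}_l)$, so any derived tuple $\alpha(head(r)) = \Delta_i(\vec{c})$ forces $R_i(\vec{c})$ into the body assignment, i.e.\ $\vec{c}$ is already a tuple of $R_i$ in $D$. Consequently $\Delta_i^t \subseteq R_i$ at every iteration $t$, and the total number of delta tuples ever derivable is at most $\sum_i |R_i| = n$. In both End and Stage semantics the delta relations are monotone (Definitions~\ref{def:endsem} and~\ref{def:stagesem} only union new tuples into $\Delta_i^t$), so $\bigcup_i \Delta_i^t$ is a nondecreasing subset of a universe of size $n$.

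Next I would bound the number of iterations. At each non-final iteration at least one new delta tuple must be added, for otherwise $D^{t} = D^{t+1}$ and the fixpoint has already been reached (guaranteed for Stage by Proposition~\ref{prop:fixpointstage}, and by the analogous convergence noted for End). Since the delta relations are monotone and capped at $n$ total tuples, this happens at most $n$ times, so the terminal stage $T$ satisfies $T \le n + 1$. For Stage semantics I would remark that the shrinking of the base relations via $R_i^t = R_i^{t-1} \setminus \Delta_i^t$ does not endanger termination: it can only remove candidate body assignments, and convergence to a unique fixpoint is already established by Proposition~\ref{prop:fixpointstage}, so the $n$-iteration cap carries over unchanged.

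Finally I would bound the per-iteration cost. Each iteration evaluates all $m$ rules against the current state $D^{t-1}$; evaluating one rule amounts to computing a conjunctive query with at most $\ell$ atoms over at most $n$ tuples, which enumerates at most $n^{\ell}$ candidate assignments and checks each (relation membership and any attribute comparisons) in $O(1)$ after indexing. Thus one iteration costs $O(m \cdot n^{\ell})$, and the whole computation costs $O(m \cdot n^{\ell+1})$, polynomial in $n$ for fixed $P$. The terminal bookkeeping adds only $O(n)$: for End the single update $R_i^T \gets R_i^0 \setminus \Delta_i^{T-1}$, and for Stage the incremental updates already charged to each stage; extracting $S = D^0 \setminus D^T$ is linear. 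I expect the only genuinely delicate point to be the termination argument for Stage semantics, where the base relations are mutated during evaluation; but since that mutation is monotone-decreasing and Proposition~\ref{prop:fixpointstage} already pins down a unique fixpoint, this reduces to the boundedness observation above.
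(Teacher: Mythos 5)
Your proof is correct and follows essentially the same route as the paper, which only sketches this result informally by appealing to standard datalog/semin\"aive evaluation: a monotone fixpoint computation with polynomially many iterations, each iteration being a fixed set of conjunctive-query evaluations. Your added observation that $\Delta_i^t \subseteq R_i$ (because the head tuple must instantiate the $R_i$ body atom) gives a tighter cap of $n$ on the number of iterations than the generic $n^{O(1)}$ bound, but the argument is otherwise the standard one the paper intends.
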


\textbf{Independent and step semantics.~} Unlike end and stage semantics, the other two semantics are computationally hard:
 
 \begin{proposition}\label{prop:step-indep-hard}
 Given a delta program $P$, an unstable database $D$ w.r.t $P$, and an integer $k$, it is NP-hard in the value of $k$ to decide whether $|Ind(P,D)| \leq k$ or $|Step(P,D)| \leq k$.
 \end{proposition}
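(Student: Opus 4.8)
The plan is to prove NP-hardness for both semantics via reduction from a known NP-hard problem. The natural candidates are \textsc{Vertex Cover} or \textsc{Hitting Set}/\textsc{Minimum Set Cover}, since the core difficulty in both independent and step semantics is choosing a minimum set of tuples that ``covers'' all the ways a rule can be triggered. I would aim to encode an instance of \textsc{Vertex Cover} (given a graph $G=(V,E)$ and integer $k$, decide whether there is a vertex cover of size at most $k$) as a delta program and database, so that a stabilizing set of size $\le k'$ corresponds exactly to a vertex cover of size $\le k$. For a single reduction to cover both cases simultaneously, I would design the instance so that the optimal independent solution and the optimal step solution coincide, which is cleanest when the rules are ``flat'' (no cascading), so that the two semantics behave identically on the constructed instance.

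Concretely, first I would represent each vertex $v \in V$ by a tuple in a relation, say $\mathtt{Vert}(v)$, and each edge $e=(u,v)$ by a tuple $\mathtt{Edge}(u,v)$. The edge relation should be treated as ``protected'' (never deletable) while vertex tuples are the deletable candidates. For each edge I would install a delta rule whose body fires precisely when \emph{both} endpoint vertices are still present, forcing at least one of them to be deleted to stabilize the edge constraint. Using the DC-to-delta-rule translation sketched earlier in the paper (the \textbf{Expressiveness of delta rules} paragraph), an edge constraint ``not both endpoints present'' is exactly a denial-constraint-style rule; its delta-rule form deletes one endpoint. The key point is that, because these rules do not chain (deleting a vertex does not enable a new rule whose head is a different vertex), the minimum stabilizing set under independent semantics is exactly a minimum vertex cover, and the same set is reachable as a terminating sequence of single-rule firings under step semantics, so $|Ind(P,D)| = |Step(P,D)| = $ (size of min vertex cover). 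Setting $k' = k$ then gives $|Ind(P,D)| \le k \iff |Step(P,D)| \le k \iff G$ has a vertex cover of size $\le k$.

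I would then verify the two directions of correctness. For soundness, any vertex cover $C$ yields a stabilizing set: removing $\{\mathtt{Vert}(v) : v \in C\}$ (and adding the corresponding delta tuples) kills every edge rule, because each edge has at least one endpoint in $C$; and this removal is realizable step-by-step. For completeness, any stabilizing set $S$ must, for every edge, delete at least one of its endpoint vertex tuples (the edge tuples are not deletable, or deleting them is never cheaper), so the projection of $S$ onto vertices is a vertex cover of size $\le |S|$. A minor technical care is needed so the optimum only ever deletes vertex tuples and never edge tuples — I would enforce this either by making edge tuples undeletable in the schema (no delta rule has an $\mathtt{Edge}$ head) or by arguing an exchange argument that replacing a deleted edge tuple by an endpoint vertex never increases the cost while preserving stability.

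The main obstacle I anticipate is handling step semantics cleanly, since its definition asks for the \emph{minimum-size} terminating sequence of non-deterministic single firings, and one must rule out pathological firing orders that delete more tuples than necessary or that get ``stuck'' deleting tuples not corresponding to a vertex cover. I would address this by keeping the constructed program cascade-free and confluent on the relevant part: because no deletion enables a rule with a new deletable head beyond the already-eligible endpoints, every maximal firing sequence terminates exactly when the deleted vertices form a cover, and the minimum over sequences equals the minimum cover. This confluence argument, together with the fact that the independent optimum is visibly the same min-cover value, is the crux; the remaining bookkeeping (data complexity of the reduction being polynomial, and that the hardness is in $k$ as required) is routine.
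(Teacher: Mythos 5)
Your overall strategy is the same as the paper's: a polynomial reduction from \textsc{Vertex Cover} with vertex tuples as the deletable candidates and one denial-style rule per edge, so that stabilizing sets correspond to vertex covers. Two of your specific choices need attention, however, and one of them is a genuine gap.

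First, for independent semantics your fallback of ``making edge tuples undeletable by giving no rule an $\mathtt{Edge}$ head'' does not work: by Definition~\ref{def:indsem}, $Ind(P,D)$ may place \emph{any} tuple of $D$ into $S$, whether or not its relation ever appears as a rule head, so deleting $\mathtt{Edge}(u,v)$ voids that edge's rule at cost $1$, exactly as cheap as deleting an endpoint. Your alternative exchange argument (swap a deleted edge tuple for one of its endpoints; since the flat program has purely positive bodies and no $\Delta$-atoms in any body, this preserves stability without increasing size) is sound and does close this hole; the paper instead adds two penalty rules with $\Delta_E$ in the body so that deleting an edge tuple forces the deletion of both endpoints, which makes it never worthwhile. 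Your route is arguably cleaner since it lets the same one-rule program serve both semantics, but you must commit to it and actually carry out the exchange, since your first option fails.

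Second, and more seriously, the step-semantics direction ``a minimum vertex cover of size $k$ implies $|Step(P,D)|\le k$'' is not established by your confluence remark. The obstruction is concrete: the body of the edge rule requires \emph{both} $\mathtt{Vert}(a)$ and $\mathtt{Vert}(b)$ to be present, so once you fire it to delete $\mathtt{Vert}(a)$, that edge can never again be used to delete $\mathtt{Vert}(b)$. For an edge with both endpoints in the cover $C$, the second endpoint must therefore be derived through a \emph{different} incident edge whose other endpoint is still alive. The paper proves a dedicated lemma here: if $C$ is minimal, every $b\in C$ has an incident edge $(b,c)$ with $c\notin C$ (otherwise $C\setminus\{b\}$ would still be a cover), and that private edge can be used to derive $\Delta_{VC}(b)$ at any time. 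Without this argument it is a priori possible that no firing sequence realizes a minimum cover, so the $\le$ direction is open in your write-up. Relatedly, you must make the construction symmetric --- either store both $\mathtt{Edge}(u,v)$ and $\mathtt{Edge}(v,u)$ as the paper does, or install two rules per edge with heads $\Delta_{\mathtt{Vert}}(x)$ and $\Delta_{\mathtt{Vert}}(y)$ --- since a single rule with a fixed head can only ever delete the first endpoint under step semantics, which would break the correspondence entirely.
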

 
 \begin{proof}[proof sketch]
 We reduce the decision problem of minimum vertex cover to finding $Step(P,D)$ and $Ind(P,D)$. Given a graph $G=(V,E)$ and an integer $k$, we define an unstable database $D$: for every $(u,v)\in E(G)$ we have $E(u,v), E(v,u)\in D$ and for every $v\in V(G)$ we have $VC(v) \in D$. 
 For independent semantics we define the delta program: (1) $\Delta_{VC}(x):- E(x,y), VC(x), VC(y)$, (2) $\Delta_{VC}(x):- VC(x), \Delta_E(x,y)$, (3) $\Delta_{VC}(y):-VC(y), \Delta_E(x,y)$. 
For step semantics, we only need rule (1). 
\amir{R2,D15:} {\revb Rules (2) and (3) are only used in the reduction to independent semantics to make the derivation of tuples of the form $E(a,b)$ not worthwhile (as in this semantics, tuples can be removed from $E$ and added to $\Delta_E$ without being derived).}
We can show that a vertex cover of size $\leq k$ is equivalent to $|Ind(P,D)| \leq k$ with the first program and $|Step(P,D)| \leq k$ with the second program (detailed in the full version).
 \end{proof}
 
 
 Naturally, if we consider the search problem, $k$ is unknown and, in the worst case, may be the size of the entire database.

\cut{
 We move now to show the hardness of the problem when step and independent semantics are considered.
 
 \paragraph*{Step semantics} The case for this semantics proves to be more difficult as it is NP-hard to find a result.

\begin{proposition}\label{prop:stephard}
 Given a delta program $P$, an unstable database $D$ w.r.t $P$ and an integer $k$, it is NP-hard in $k$ to decide the whether $|Step(P,D)| \leq k$.
 \end{proposition}
 
 Note that in our case (Definition \ref{def:minstabilizing}), $k$ is unknown and may be the size of the entire database.

 \paragraph*{Independent Semantics}
 The variation of finding the result of independent semantics also proves to be NP-hard.
 
 \begin{proposition}\label{thm:minstable}
Given a delta program $P$, an unstable database $D$ w.r.t $P$ and an integer $k$, it is NP-hard in $k$ to decide the whether $|Ind(P,D)| \leq k$.
\end{proposition}

Again, $k$ is not given in our problem definition and thus, trying all values of $k$ may prove to be exponential in the size of the database. 
}

\section{Handling Intractable Cases}\label{sec:algo}
We now present techniques to handle independent and step semantics.

\subsection{Algorithm for Independent Semantics}\label{sec:ind_algo}
Our approach relies on the provenance represented as a Boolean formula \cite{GreenKIT07}, where the provenance of each tuple is a DNF formula, each clause describing a single assignment and delta tuples are negated variables. 



Algorithm \ref{algo:independent} uses this idea to find a stabilizing set. We generate the provenance of each {\em possible} delta tuple (not only ones that can be derived using an operational semantics and the rules) represented as a Boolean formula (line \ref{line:eval}). 
This is a DNF formula for each delta tuple, where tuples with relations in $\mathbf{R}$ are represented as their own literals and tuples in with relations in $\mathbf{\Delta}$ are represented as the negation of their counterpart tuples with relations in $\mathbf{R}$. 
In lines \ref{line:initformula}--\ref{line:formula} we connect these formulae using $\lor$ into one formula representing the provenance of all the delta tuples (this is a disjunction of DNFs). We negate this formula, resulting in a conjunction of CNFs. 
We then find a {\em satisfying assignment giving a minimum number of True values to negated variables}. 
{\revb In the negated formula, each satisfied clause says that at least one of the tuples needed for the assignment the clause represents is not present in the database. An assignment that gives the minimum number of negated variables the value True represents the minimum number of tuples whose deletion from the database and addition of their delta counterparts would stabilize the database. }
Changing negated variables to positive ones and vice-versa will give us an instance of the {\em min-ones SAT} problem \cite{KratschMW16} (line \ref{line:solver}), where the goal is to find a satisfying assignment to a Boolean formula, which maps the minimum number of variables to $True$. 
In line \ref{line:output}, we output the facts whose negated form is mapped to True.


\begin{algorithm}[t]
\begin{footnotesize}
    \SetKwInOut{Input}{Input}\SetKwInOut{Output}{Output}
    \LinesNumbered
    \Input{Delta program $P$, unstable database $D$}
    \Output{A stabilizing set $S\subseteq D$} \BlankLine
    Consider all possible tuples in $t\in D \cup \Delta(D)$ and store the DNF provenance for each tuple $t$\;\label{line:eval}
    Let $F$ be an empty Boolean formula\;\label{line:initformula}
    \ForEach{$t\in P(D)$}
    {\label{l:loop}
        $F\gets F \lor Prov(t)$\;\label{line:formula}
    }
    $\alpha \gets \texttt{Min-Ones-SAT}(\neg F)$\;\label{line:solver}
    output $\{t' ~|~ \alpha (\neg t')=True$\};\label{line:output}
\caption{Find Stabilizing Set - Independent}
\label{algo:independent}
\end{footnotesize}
\end{algorithm} \DecMargin{1em}

\begin{example}
Reconsider the database in Figure \ref{fig:dblp} and the program composed of the rules in Figure \ref{fig:deltaprog}. Algorithm \ref{algo:independent} generates the provenance formula and negates it: 
\begin{scriptsize}
\begin{equation*}
\begin{split}
\neg \f_2\land (\neg \au_2\lor \neg \ad_2\lor \f_2) \land (\neg \au_3\lor \neg \ad_3\lor \f_2) \land (\neg \pp_1\lor \neg \ww_1\lor \au_2) \land\\ (\neg \pp_2\lor \neg \ww_2\lor \au_3) \land (\neg \cc\lor \pp_1\lor \neg \ww_1 \lor \neg \ww_2)
\end{split}
\end{equation*}
\end{scriptsize}
It then generates the assignment giving the value True to the smallest number of {\revb negated} literals in line \ref{line:solver}.
This satisfying assignment is $\alpha$ such that $\alpha(\f_2) = \alpha(\ad_2) = \alpha(\ad_3) = False$ and gives every other variable the value True. Finally, in line \ref{line:output}, the algorithm returns the set of tuples that $\alpha$ mapped to False, i.e., $\{\f_2,\ad_2,\ad_3\}$, as in Example \ref{ex:ind}.
\end{example}

\noindent
\textbf{Correctness:~}
If procedure min-ones SAT finds the minimum satisfying assignment, Algorithm \ref{algo:independent} outputs $Ind(P,D)$. Yet, any satisfying assignment would form a stabilizing set. 


\noindent
\textbf{Complexity:~}
Given a database $D$ and a program $P$, the complexity of computing the provenance Boolean formula is $|D|^{O(|P|)}$; the time to use a solver to find the minimum satisfying assignment. 
Theoretically, such algorithms are not polynomial, however, they are efficient in practice.

\subsection{Algorithm for Step Semantics}\label{sec:step_algo}
\amir{R2,D16,D17:}
We describe a greedy algorithm for step semantics. We will use the concepts of {\em provenance graph} and the {\em benefit} of a tuple. 
{\revb A provenance graph \cite{DeutchMRT14} is a collection of derivation trees \cite{AbiteboulHV95}. A derivation tree of a tuple, $T=(V,E)$, illustrates the tuples that participated in its derivation (the set of nodes $V$), and the process and rules that were used \cite{DeutchGM15} (each rule that uses $t_1, \ldots, t_k$ to derive $t$ is modeled by edges from $t_1, \ldots, t_k$ to $t$). 
When there are several derived tuples of interest, a provenance graph joins together derivation trees, by allowing for the same tuple, used in the derivations of multiple tuples, to be associated with a single node, and allow multiple edges to be adjacent to it. 
When the entire provenance is concerned, the graph shows, for every tuple $t$, all the derivations involving $t$.}
In our case, only delta tuples are derived, so we define the provenance graph as follows: each tuple is associated with a node and there is an edge from $t_1$ to $\Delta(t_2)$ if $t_1$ participates in an assignment resulting in $\Delta(t_2)$. The {\em benefit} of each non-delta node $t$ is the number of assignments it participates in minus the number of assignments $\Delta(t)$ participates in. 
Algorithm \ref{algo:step} stores the provenance as a graph and for each node $t$ we store its benefit $b_t$ (line \ref{l:end}). 

We consider the nodes of the provenance graph $G$ in each layer and the set of assignments $Assign$. 
For each layer $i$ in $G$, denoted by $G_i$, we greedily choose to add to the stabilizing set the tuple $t$, where $\Delta(t)$ is in layer $i$ and $b_t$ is the maximum across all tuples $t$ where $\Delta(t)$ is in layer $i$. 
We then delete the subgraph induced by $\{\Delta(t')~|~ \forall \alpha \in Assign \text{ s.t. } Im(\alpha) = \Delta(t') ~\exists t_k \in Dom(\alpha) \cap S \land t' \neq t_k\}$. 
In words, we delete all delta tuples, such that each one of their assignments contains a tuple $t_k$ that was chosen to be deleted, except $\Delta(t_k)$ itself and the tuples reachable from it. We continue this process until only the delta counterparts of the selected tuples remain in the provenance graph. This ensures that we only delete delta tuples that cannot be generated by any assignment.

\begin{figure}
\hspace{-3.5mm}
\begin{tikzpicture}[scale=0.42, ]
\node (da1) at (2,1.5) {$\Delta(\f_2)$};
\node (da2) at (0,3) {$\Delta(\au_2)$};
\node (da3) at (4,3) {$\Delta(\au_3)$};
\node (dp1) at (-4,4.5) {$\Delta(\pp_1)$};
\node (dp2) at (8,4.5) {$\Delta(\pp_2)$};
\node (dw1) at (-6,4.5) {$\Delta(\ww_1)$};
\node (dw2) at (10,4.5) {$\Delta(\ww_2)$};
\node (dc) at (2,5) {$\Delta(\cc)$};
 \node[red] (w1) at (-6,0) {$\mathbf{\ww_1},3$};
 \node (p1) at (-4,0) {$\pp_1,1$};
 \node[red] (a2) at (-2,0) {$\mathbf{\au_2},-1$};
 \node (ad1) at (0,0) {$\mathbf{\ad_2},\emptyset$};
 \node[red] (a1) at (2,0) {$\f_2,-1$};
 \node (ad2) at (4,0) {$\mathbf{\ad_3},\emptyset$};
 \node[red] (a3) at (6,0) {$\mathbf{\au_3},-1$};
 \node (p2) at (8,0) {$\pp_2,2$};
 \node[red] (w2) at (10,0) {$\mathbf{\ww_2},3$};
 \node (c) at (12,0) {$\cc,1$};
 \draw[->] (a1) -- (da1);
 
 \draw[->] (da1) -- (da2);
 \draw[->] (ad1) -- (da2);
 \draw[->] (a2) -- (da2);
 
 \draw[->] (da1) -- (da3);
 \draw[->] (ad2) -- (da3);
 \draw[->] (a3) -- (da3);
 
 \draw[->] (da2) -- (dp1);
 \draw[->] (w1) -- (dp1);
 \draw[->] (p1) -- (dp1);
 
 \draw[->] (da2) -- (dw1);
 \draw[->] (w1) -- (dw1);
 \draw[->] (p1) -- (dw1);
 
 \draw[->] (da3) -- (dp2);
 \draw[->] (w2) -- (dp2);
 \draw[->] (p2) -- (dp2);
 
 \draw[->] (da3) -- (dw2);
 \draw[->] (w2) -- (dw2);
 \draw[->] (p2) -- (dw2);
 
 \draw[->] (w1) -- (dc);
 \draw[->] (w2) -- (dc);
 \draw[->] (dp1) -- (dc);
 \draw[->] (c) -- (dc);
\end{tikzpicture}
\caption{Provenance graph for $D$ in Figure \ref{fig:dblp} and the program in Figure \ref{fig:deltaprog}. Red tuples are chosen for the set returned by Algorithm \ref{algo:step}}
\label{fig:provgraph}
\vspace{-2mm}
\end{figure}

\IncMargin{1em}
\begin{algorithm}[t]
\begin{footnotesize}
    \SetKwInOut{Input}{Input}\SetKwInOut{Output}{Output}
    \LinesNumbered
    \Input{Delta program $P$, unstable database $D$}
    \Output{A stabilizing set $S\subseteq D$} \BlankLine
    Store the directed provenance graph $G$ of $End(P, D)$\;\label{l:end}
    Compute $b_t$ for each non-delta tuple $t$\;\label{l:benefit}
    $Assign \gets \{\alpha ~|~ \alpha \text{ is an assignment that derives } \Delta(t) \in \Delta(End(P, D))\}$\;
    $S \gets \emptyset$\;
    \ForEach{Layer $1\leq i \leq L$}
    {\label{l:layer}
        \While{$\exists \Delta(t) \in G_i$ s.t. $t\not\in S$}
        {\label{l:existsdelta}
            $t_m = arg\,max_{t\in V(G), \Delta(t) \in V(G_i)} b_t$\;
            $S \gets S \cup \{t_m\}$\;\label{l:add}
            $G \gets G \setminus G[\Delta(t')~|~ \forall \alpha \in Assign \text{ s.t. } Im(\alpha) = \Delta(t') ~\exists t_k \in Dom(\alpha) \cap S \land t' \neq t_k]$\;\label{l:delete_from_g}
        }
    }
    output $S$;\label{line:output2}
\caption{Find Stabilizing Set - Step}
\label{algo:step}
\end{footnotesize}
\end{algorithm}\DecMargin{1em}

\begin{example}
Reconsider our running example. 
Its provenance graph according to end semantics is shown in Figure \ref{fig:provgraph}. After computing $b_t$ for all the leaf tuples, we begin iterating over the layers of the graph. In layer $1$ we only have $\Delta(\f_2)$, with $b_{\f_2} = -1$, so we choose it. Since $\f_2$ is only connected to $\Delta(\f_2)$, we do not change $G$. We then continue to layer $2$ where we have $\Delta(\au_2)$ and $\Delta(\au_3)$. We arbitrarily choose $\au_2$ as $b_{\au_2} = b_{\au_3} = -1$, and do not change $G$. After that, we choose $\au_3$ and again not change $G$. In layer $3$, we have $\ww_1,\ww_2, \pp_1, \pp_2$ where $ b_{\pp_2} < b_{\pp_1} < b_{\ww_1} = b_{\ww_2}$, so we choose arbitrarily to include $\ww_1$ in $S$. We then delete from $G$ the subgraph induced by $\Delta(\ww_1)$. Since there are more delta tuples in this layer we continue to choose $\ww_2$ and delete from $G$ the subgraph induced by $\Delta(\ww_2)$. 
Since there are no more delta tuples in layers $3$ and $4$ except $\Delta(\ww_1), \Delta(\ww_2)$ where $\ww_1, \ww_2 \in S$, we return $S = \{\f_2,\au_2,\au_3,\ww_1,\ww_2\}$.
\end{example}

\noindent
\textbf{Correctness:~}
Algorithm \ref{algo:step} returns a stabilizing set according to step semantics; the minimum set returned is $Step(P,D)$. 





\noindent
\textbf{Complexity:~}
Given a database $D$ and a program $P$, the overall complexity of Algorithm \ref{algo:step} is $|D|^{O(|P|)}$, since 
it is the size of the provenance graph. 
\section{Implementation \& Experiments}\label{sec:experiments}

We have implemented our algorithms in Python 3.6 with the underlying database stored in PostgreSQL 10.6. Delta rules are implemented as SQL queries and delta relations are auxiliary relations in the database. \amir{R2,D16:} {\revb For Algorithm \ref{algo:independent} we have used the Z3 SMT solver \cite{MouraB08} and specifically, the relevant part that allows for the formulation of optimization problems such as Min-Ones-SAT \cite{BjornerPF15}, which draws on previous work in this field \cite{NieuwenhuisO06,abs-1202-1409,LiAKGC14}.}
For Algorithm \ref{algo:step}, we have used Python's NetworkX package \cite{hagberg2008exploring} to model the graph and manipulate it as required by the algorithm. 
\amir{R2,D18:} {\revb The approach used to evaluate the results of all semantics is a standard n\"aive evaluation, evaluating all rules iteratively, and terminating when no new tuples have been generated.}
The experiments were performed on Windows 10, 64-bit, with 8GB of RAM
and Intel Core Duo i7 2.59 GHz processor, {\revc except for the \holo\ comparison which was performed on Ubuntu 18 on a VMware workstation 12 with 6.5GB RAM allotted. The reason for that is that the Torch package version 1.0.1.post2 required for \holo\ did not run on Windows.}

{\bf Databases: }
We have used a fragment of the MAS database \cite{mas}, containing academic information about universities, authors, and publications. It includes over 124K tuples and the following relations: Organization(\underline{oid}, name), Author(\underline{aid}, name, oid), Writes(aid, pid), Publication(\underline{pid}, title, year), Cite(citing, cited).
 We have also used a fragment of the TPC-H dataset \cite{tpc}, which included 376,175 tuples. This dataset includes 8 tables (customer, supplier, partsupp, part, lineitem, orders, nation, and region).
 
\begin{table}[]
    \centering \tiny
    \caption{MAS Programs \amir{R3,D5,6: said it is too long. Shorten? Also, programs that inspect cascading should have adjacent numbers}}\label{tbl:programs}
    \begin{tabular}{| c | l | }
        \hline {\bf Num.} & \multicolumn{1}{c|}{\bf Program} \\
        \hline
        1 & \specialcell{
            (1) $\Delta_A(aid,n,oid):- A(aid,n,oid), n = C_1$\\
            (2) $\Delta_W(aid,pid):- W(aid,pid), aid = C_2$} \\
        \hline
        2 & \specialcell{
            (1) $\Delta_W(aid,pid):- W(aid,pid), A(aid,n,oid), aid = C$} \\
        \hline
        3 & \specialcell{
            (1) $\Delta_A(aid,n,oid):- W(aid,pid), A(aid,n,oid), aid = C$\\
            (2) $\Delta_W(aid,pid):- W(aid,pid), A(aid,n,oid), aid = C$} \\
         \hline
        4 & \specialcell{
            (1) $\Delta_A(aid,pid):- O(oid,n_2), A(aid,n,oid), oid = C$\\
            (2) $\Delta_O(aid,pid):- O(oid,n_2), A(aid,n,oid), oid = C$} \\
         \hline
         5 & \specialcell{
            (1) $\Delta_A(aid,n,oid):- A(aid,n,oid), n = C_1$\\
            (2) $\Delta_W(aid,pid):- W(aid,pid), \Delta_A(aid,n,oid)$} \\
         \hline
         6 & \specialcell{
            (1) $\Delta_A(aid,n,oid):- A(aid,n,oid), n = C_1$\\
            (2) $\Delta_W(aid,pid):- W(aid,pid), \Delta_A(aid,n,oid)$\\
            (3) $\Delta_P(pid,t):- P(pid,t), \Delta_W(aid,pid), A(aid,n,oid)$} \\
         \hline
         7 & \specialcell{
            (1) $\Delta_P(pid,t):- P(pid,t), pid = C$\\
            (2) $\Delta_C(pid,cited):- C(pid,cited), \Delta_P(pid,t)$\\
            (3) $\Delta_C(citing,pid):- C(citing,pid), \Delta_P(pid,t)$} \\
         \hline
         8 & \specialcell{
            (1) $\Delta_A(aid,n,oid):- W(aid,pid), A(aid,n,oid), aid = C$\\
            (2) $\Delta_W(aid,pid):- W(aid,pid), A(aid,n,oid), aid = C$\\
            (3) $\Delta_P(pid,t):- P(pid,t), \Delta_W(aid,pid), A(aid,n,oid)$\\
            (4) $\Delta_P(pid,t):- P(pid,t), W(aid,pid), \Delta_A(aid,n,oid)$} \\
         \hline
         9 & \specialcell{
            (1) $\Delta_A(aid,n,oid):- A(aid,n,oid), n = C$\\
            (2) $\Delta_W(aid,pid):- W(aid,pid), \Delta_A(aid,n,oid)$\\
            (3) $\Delta_P(pid,t):- P(pid,t), \Delta_W(aid,pid)$\\
            (4) $\Delta_C(pid,cited):- C(pid,cited), \Delta_P(pid,t), pid < C$} \\
         \hline
         10 & \specialcell{
            (1) $\Delta_O(oid,n_2):- O(oid,n_2), oid = C$\\
            (2) $\Delta_A(aid,n,oid):- A(aid,n,oid), \Delta_O(oid,n_2)$\\
            (3) $\Delta_W(aid,pid):- W(aid,pid), \Delta_A(aid,n,oid)$\\
            (4) $\Delta_P(pid,t):- P(pid,t), \Delta_W(aid,pid)$} \\
         \hline
         \hline
        {\revc 11--15} & \specialcell{
            $\Delta_C(pid,c_2):- \{\{\{\{\{C(pid,c_2)\}^{11}, P(t,pid)\}^{12},$\\ $W(aid,pid)\}^{13}, A(aid,n,oid)\}^{14}, O(oid,n_2)\}^{15}$}\\
        \hline \hline
        {\revc 16--20} & \specialcell{
            (1) $\Delta_O(oid,n_2):- O(oid,n_2), oid = C$ (p. 16-20)\\
            (2) $\Delta_A(aid,n,oid):- A(aid,n,oid), \Delta_O(oid,n_2)$ (p. 17-20)\\
            (3) $\Delta_W(aid,pid):- W(aid,pid), \Delta_A(aid,n,oid)$ (p. 18-20)\\
            (4) $\Delta_P(pid,t):- P(pid,t), \Delta_W(aid,pid)$ (p. 16-20)\\
            (5) $\Delta_C(citing,pid):- C(citing,pid), \Delta_P(pid,t)$ (p. 20)}\\
        \hline
    \end{tabular}
\end{table}

\begin{table}[]
    \centering \tiny
    \caption{TPC-H Programs}\label{tbl:tpch_programs}
    \begin{tabular}{| c | l | }
        \hline {\bf Num.} & \multicolumn{1}{c|}{\bf Program} \\
        \hline
         1 & \specialcell{
            (1) $\Delta_{PS}(sk,X):- PS(sk,X), S(sk,Y), sk < C$\\
            (2) $\Delta_{LI}(sk,X):- LI(sk,X), \Delta_{PS}(sk,Y)$
            } \\
         \hline
         2 & \specialcell{
            (1) $\Delta_{PS}(sk,X):- PS(sk,X), sk < C$\\
            (2) $\Delta_{LI}(sk,X):- LI(sk,X), \Delta_{PS}(sk,Y)$} \\
        \hline
        3 & \specialcell{
            (1) $\Delta_{PS}(sk,pk,X):- PS(sk,pk,X), S(sk,Y), P(pk,Y),$\\
                $sk < C$\\
            (2) $\Delta_{LI}(sk,X):- LI(sk,X), \Delta_{PS}(sk,Y)$}\\
        \hline
        4 & \specialcell{
            (1) $\Delta_{LI}(ok,X):- LI(ok,X), ok < C_2$\\
            (2) $\Delta_S(sk,X):- S(sk,X), \Delta_{LI}(sk,ok,Y)$\\
            (3) $\Delta_C(ck,X):- C(ck,X), O(ok,ck,Y), \Delta_{LI}(ok,Z)$} \\
        \hline
        5 & \specialcell{
            (1) $\Delta_N(nk,X):- N(nk,X), nk = C_3$\\
            (2) $\Delta_S(nk,X):- S(nk,X), \Delta_N(nk,Y), C(nk,Z)$\\
            (3) $\Delta_C(nk,X):- S(nk,X), \Delta_N(nk,Y), C(nk,Z)$}\\
        \hline
        6 & \specialcell{
            (1) $\Delta_O(ck,X):- O(ck,X), C(ck,Y), ck < C_4$\\
            (2) $\Delta_{PS}(sk,X):- PS(sk,X), S(sk,Y), sk < C_4$\\
            (3) $\Delta_{LI}(sk,X):- LI(ok,X), \Delta_O(ok,Y)$\\
            (4) $\Delta_{LI}(sk,X):- LI(sk,X), \Delta_{PS}(sk,Y)$\\}\\
        \hline
    \end{tabular}
    \vspace{-2mm}
\end{table}

{\bf Test programs:} 
{\revc Tables \ref{tbl:programs} and \ref{tbl:tpch_programs} show the programs we have used for the MAS and TPC-H datasets experiments, respectively. 
We use the first letter of each table as an abbreviation, and denote by $C/C_i$ a constant we have assigned to an attribute. 
The programs were designed for different scenarios to compare the four semantics and highlight the manner in which each semantics is advantageous. 
The programs can roughly be divided into three sets: (1) those that are meant to mimic the semantics of integrity constraints such as DCs (programs 1--4, 11--15 in Table \ref{tbl:programs}), (2) those that are meant to perform cascade deletion (programs 5, 9, 10, and 16--20 in Table \ref{tbl:programs} and programs 1--3 in Table \ref{tbl:tpch_programs}), and (3) those that mix between the two (programs 6--8 in Table \ref{tbl:programs}, and programs 4--6 in Table \ref{tbl:tpch_programs}). 
For programs that express integrity constraints, independent semantics would guarantee a minimum size repair while the other semantics may delete a larger number of tuples. For example, in program 2 in Table \ref{tbl:programs}, using end, stage or step semantics may yield a result composed of Writes tuples which will likely not be minimal in size. If instead we use independent semantics, we could have a result of a single \auth\ tuple.
For programs that are purely designed for cascade deletion, we expect the result of all semantics to be the same and therefore the fastest and most accurate algorithm should be used, i.e., end or stage semantics. 
For the programs that perform a mix of the two options, it would depend on the desired result. For example, program 8 in Table \ref{tbl:programs} is designed to distinguish between stage and step semantics, where stage semantics will not be able to use rules 3 and 4, while step semantics will not be able to derive all delta tuples from both rules 1 and 2. }

\textbf{Setting and highlights:~}
We have focused on four different aspects in our experimental study: (1) the relationship between the sets found for each semantics; (2) the size of the result set computed by each algorithm; (3) the algorithms execution times and their breakdown and (4) a comparison of our approach with PostgrSQL and MySQL triggers, {\revc and a comparison with the state-of-the-art data repair system \holo\ \cite{RekatsinasCIR17} that repairs cells instead of deleting tuples.}
We have manually checked that Algorithms \ref{algo:independent}, \ref{algo:step} output the actual result for programs 1, 2, 3, 5--9 (where the sizes of the result are small enough to be manually verified). Hence, we refer to the output given by these algorithms as the result of the two semantics. 
\amir{R2,D18: highlights}
{\revb All of the algorithms computed the results in feasible time (the average runtimes for end, stage, step and independent were 16.9, 21.1, 389.5, and 73 seconds resp. for the programs in Table \ref{tbl:programs}). 
In general, computing the results of end and stage semantics is faster than those of step and stage semantics. Thus, for programs that perform cascade deletion (e.g., 16--20 in Table \ref{tbl:programs}), where the result for all semantics is the same, it may be preferable to use end or stage semantics. 
For programs such as 11--15 in Table \ref{tbl:programs}, where there is a clear difference between the results, users may choose the desired semantics they wish to enforce, while aware of the difference in performance. As an example, for these programs, independent semantics would correspond to the semantics of DCs (but would be slower to compute the repair), while the other ones would correspond to triggers. 
We also demonstrate the discrepancy between the results of the different semantics using specific programs and Table \ref{tbl:mss_containment} showing the relationships between the results. 
For example, for program 8 in Table \ref{tbl:programs}, there is a no containment of the result of stage in the result of step semantics and vice versa. }

\begin{figure*}[!htb]
\captionsetup{justification=centering}
    \centering
    \begin{subfigure}[b]{0.33\textwidth}
        \centering
        \centering
        \includegraphics[trim=0cm 0.5cm 0cm 0cm, width=2in]{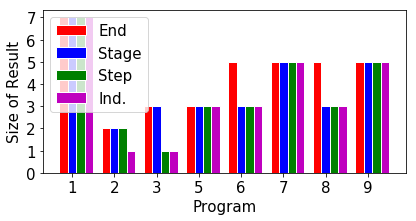}
        \caption{Size of results Programs 1--10}
        \label{graph:prog_size}
    \end{subfigure}%
    \begin{subfigure}[b]{0.33\textwidth}
        \centering
        \includegraphics[trim=0cm 0.5cm 0cm 0cm, width=2.1in]{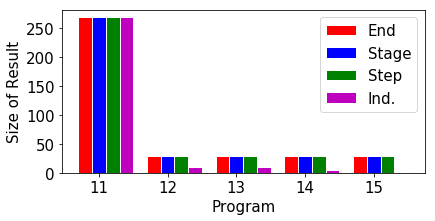}
        \caption{Size of results; Programs 11--15}
        \label{graph:join_size}
    \end{subfigure}%
    \begin{subfigure}[b]{0.33\textwidth}
        \centering
        \includegraphics[trim=0cm 0.5cm 0cm 0cm, width=2.2in]{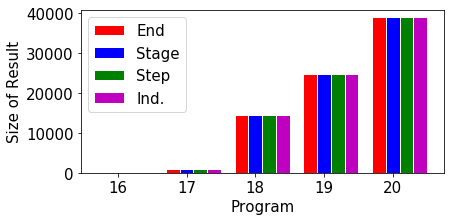}
        \caption{Size of results; Programs 16--20}
        \label{graph:num_rules_size}
    \end{subfigure}
    \caption{Comparison of result sizes for the four semantics with the programs from Table \ref{tbl:programs}}
    \label{fig:sizes}
    \vspace{-3mm}
\end{figure*}
\begin{figure*}[!htb]
\captionsetup{justification=centering}
    \centering
    \centering
        \includegraphics[trim=0cm 0cm 0cm 0cm, width=6.5in]{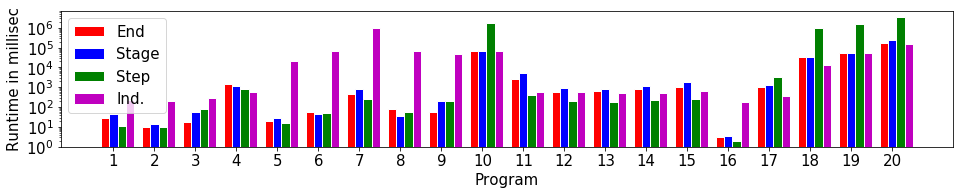}
    \caption{{\revb Execution time for finding the results of the four semantics with the programs from Table \ref{tbl:programs}} \amir{R2,D18: change to base 10}}
    \label{fig:runtimes}
    \vspace{-3mm}
\end{figure*}

\begin{figure*}[!htb]
\captionsetup{justification=centering}
    \centering
    \begin{subfigure}[b]{0.25\textwidth}
        \centering
        \includegraphics[trim=0cm 0.5cm 0cm 0cm, width=1.6in]{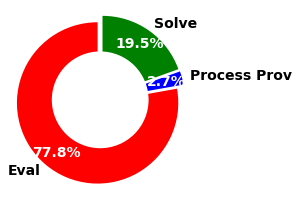}
        \caption{Algo. \ref{algo:independent} (1--15)}
        \label{graph:breakdown_ind}
    \end{subfigure}%
    \begin{subfigure}[b]{0.25\textwidth}
        \centering
        \includegraphics[trim=0cm 0.5cm 0cm 0cm, width=1.6in]{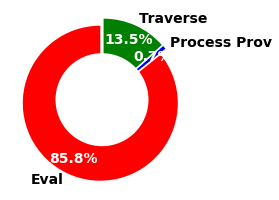}
        \caption{Algo. \ref{algo:step}  (1--15)}
        \label{graph:breakdown_step}
    \end{subfigure}%
    \begin{subfigure}[b]{0.25\textwidth}
        \centering
        \includegraphics[trim=0cm 0.5cm 0cm 0cm, width=1.6in]{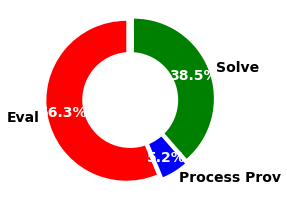}
        \caption{Algo. \ref{algo:independent} (16--20)}
        \label{graph:breakdown_ind_num_rules}
    \end{subfigure}%
    \begin{subfigure}[b]{0.25\textwidth}
        \centering
        \includegraphics[trim=0cm 0.5cm 0cm 0cm, width=1.8in]{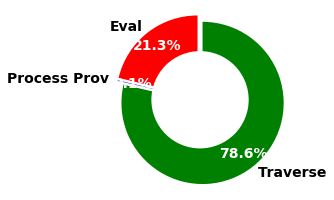}
        \caption{Algo. \ref{algo:step}  (16--20)}
        \label{graph:breakdown_step_num_rules}
    \end{subfigure}
    \caption{Runtime breakdown for programs 1--15 and 16--20, and Algorithms \ref{algo:independent} (ind. sem.) and \ref{algo:step} (step sem.)}
    \label{fig:breakdown}
    \vspace{-3mm}
\end{figure*}

\begin{figure}[!htb]
    \centering
    \begin{subfigure}[b]{0.49\linewidth}
        \centering
        \includegraphics[trim=0cm 0.5cm 0cm 0cm, width=1.6in]{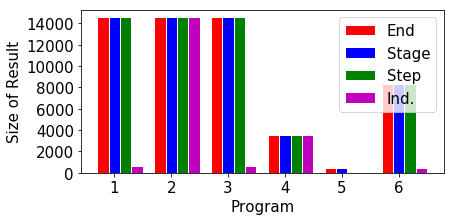}
        \caption{Size of results; TPC-H}
        \label{graph:tpch_size}
    \end{subfigure}%
    \begin{subfigure}[b]{0.49\linewidth}
        \centering
        \includegraphics[trim=0cm 0.5cm 0cm 0cm, width=1.6in]{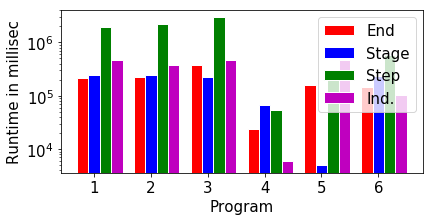}
        \caption{{\revb Runtime; TPC-H}}
        \label{graph:tpch_runtime}
    \end{subfigure}
    \caption{Comparison of results sizes and runtimes for the four semantics with TPC-H programs}
    \label{fig:tpch_results}
    \vspace{-4mm}
\end{figure}

\begin{table}[]
    \centering \tiny
    \caption{Containment of results for the programs in Tables \ref{tbl:programs} and \ref{tbl:tpch_programs}}\label{tbl:mss_containment}
    \begin{tabular}{| c | c | c | c |}
        \hline {\bf Program} & $\mathbf{Step = Stage}$ & $\mathbf{Ind \subseteq Stage}$ & $\mathbf{Ind \subseteq Step}$ \\
        \hline
        1  & \cellcolor{green}\cmark & \cellcolor{green}\cmark & \cellcolor{green}\cmark \\
        \hline 2 & \cellcolor{green}\cmark & \cellcolor{red}\xmark & \cellcolor{red}\xmark \\ 
        \hline 3  & \cellcolor{red}\xmark & \cellcolor{green}\cmark & \cellcolor{green}\cmark \\
        \hline 4  & \cellcolor{red}\xmark & \cellcolor{green}\cmark & \cellcolor{green}\cmark \\
        \hline 5  & \cellcolor{green}\cmark & \cellcolor{green}\cmark & \cellcolor{green}\cmark \\
        \hline 6 & \cellcolor{green}\cmark & \cellcolor{green}\cmark & \cellcolor{green}\cmark \\
        \hline 7 & \cellcolor{green}\cmark & \cellcolor{green}\cmark & \cellcolor{green}\cmark \\
        \hline 8  & \cellcolor{red}\xmark & \cellcolor{red}\xmark & \cellcolor{green}\cmark \\
        \hline 9 & \cellcolor{green}\cmark & \cellcolor{green}\cmark & \cellcolor{green}\cmark \\
        \hline 10  & \cellcolor{green}\cmark & \cellcolor{green}\cmark & \cellcolor{green}\cmark \\ 
        \hline \hline
        11 & \cellcolor{green}\cmark & \cellcolor{green}\cmark & \cellcolor{green}\cmark \\
        \hline 12--15 & \cellcolor{green}\cmark & \cellcolor{red}\xmark & \cellcolor{red}\xmark \\ 
        \hline  \hline
        16--20 & \cellcolor{green}\cmark & \cellcolor{green}\cmark & \cellcolor{green}\cmark \\
        \hline \hline
        T-1  & \cellcolor{green}\cmark & \cellcolor{red}\xmark & \cellcolor{red}\xmark \\
        \hline 
        T-2 & \cellcolor{green}\cmark & \cellcolor{green}\cmark & \cellcolor{green}\cmark \\ 
        \hline 
        T-3  &  \cellcolor{green}\cmark & \cellcolor{red}\xmark & \cellcolor{red}\xmark \\
        \hline 
        T-4  &  \cellcolor{green}\cmark & \cellcolor{red}\xmark & \cellcolor{red}\xmark \\
        \hline 
        T-5  & \cellcolor{red}\xmark & \cellcolor{green}\cmark & \cellcolor{green}\cmark \\
        \hline 
        T-6 & \cellcolor{green}\cmark & \cellcolor{red}\xmark & \cellcolor{red}\xmark \\
        \hline
    \end{tabular}
\end{table}

\textbf{Containment of results:~}
Table \ref{tbl:mss_containment} shows the relationship between the results generated for the different semantics. The table has three columns: $Step = Stage$, describing whether the result of stage semantics is equal to the result of step semantics, $Ind \subseteq Stage$ and $Ind \subseteq Step$ which capture whether the result of independent semantics is contained in the result of stage and step semantics respectively. The other relationships always hold, as shown in Figure \ref{fig:mss_classes}. 
We start by reviewing the results for the programs in Table \ref{tbl:programs}. For program 2, there is no containment of the result of independent semantics, since it includes a single Author tuple which cannot be derived, so it cannot be in the results of stage or step semantics. Programs 3 and 4 are composed of two rules with the same body, so the result of stage semantics contains all derivable tuples while the result of step and independent semantics contains only one Author tuple (this is also evident in Figure \ref{graph:prog_size} for program 3). 
Program 8 was designed based on the proof of Proposition \ref{prop:replationships}, and thus ``separates'' between step and stage semantics. For programs 12--15, the tuples chosen for the result in independent semantics cannot be derived and hence there is no containment. Finally, for programs 16--20, all derived tuples have to be included in the result, according to all semantics and, therefore, all the conditions in the table are true. 
The results for the TPC-H programs in Table \ref{tbl:tpch_programs} are shown in the lower part of Table \ref{tbl:mss_containment} with the prefix ``T''. 
As for the first column, we found that only for program 5, $Stage \not\subseteq Step$. This program contains two rules with the same body, and step semantics was able to delete fewer tuples by selecting the minimal set of Customer and Supplier delta tuples to derive. 
For the second and third columns, the result of independent semantics was not contained in the result of either step or stage or both for all programs except programs 2 and 5, as Algorithm \ref{algo:independent} deleted tuples that were not derivable by other semantics. 

\textbf{Results size:~}
Figure \ref{fig:sizes} depicts the results size for the different programs in Table \ref{tbl:programs}. For the chart in Figure \ref{graph:prog_size}, we included all programs except for 4 and 10, as they would have distorted the scale. For program 4, the sizes were 956 for end and stage semantics and 1 for step and independent semantics. For program 10, the sizes of all results were 24,798. In Figure \ref{graph:prog_size}, as predicted in Figure \ref{fig:mss_classes}, the size of the result of end semantics is always larger than the sizes of the results according to the other semantics. For program 2, the result of independent semantics can be of size 1 (the Author tuple with $aid = C$), whereas all other semantics may include only Writes tuples, since Author tuples cannot be derived. 
Furthermore, note that programs 3 and 4 was designed to have only one tuple in the result of step and independent semantics (the Organization tuple with oid $C$), and all Author tuples along with the Organization tuple for end and stage semantics. 
Figure \ref{graph:join_size} shows the results for programs 11--15. Note that the results of all semantics except for independent semantics can only include Cite tuples. Thus, the results size according to end, stage and step semantics is identical for all programs, but the result size for the independent semantics actually decreases as the number of joins increases. 
In Figure \ref{graph:num_rules_size}, all results sizes are equal for every program since all possible tuples need to be included in the stabilizing set by all semantics. The maximum result size for program 20 was 38,954. 
Figure \ref{graph:tpch_size} shows the sizes of the results for the TPC-H programs in Table \ref{tbl:tpch_programs}, the largest being 14,550 tuples for programs 1, 2, 3 through end, stage and step semantics. 
The rational for the results here is similar, where for programs 1, 3, 5 and 6 Algorithm \ref{algo:independent} (ind. semantics) outputted a smaller result by choosing tuples that were not derived by the rules.

\textbf{Execution times:~}
We have examined the execution time for the algorithms of the four semantics and all programs in Table \ref{tbl:programs} (Figure \ref{fig:runtimes}) and Table \ref{tbl:tpch_programs} (Figure \ref{graph:tpch_runtime}). 
The recorded times are presented in $\log$ scale. When the execution time is not negligible, Algorithms \ref{algo:independent} and \ref{algo:step} require the largest execution time for most programs due to the overhead of generating the Boolean formula and finding the minimum satisfying assignment or generating the provenance graph and traversing it. 
For programs 10 and 16--20, all derived tuples participate in the result of each semantics and, hence, all algorithms have to ``work hard''. In particular, Algorithm \ref{algo:step} has to traverse a provenance graph of 5 layers for program 20. 
The results for Programs 11--15 (single rule with an increasing number of joins) were all fast (the slowest time was 5.5 seconds,  incurred for stage semantics). Thus, an increase in the number of joins does not necessarily reflect an increase in execution time. 
Most computations were dominated either by Algorithm \ref{algo:independent} or \ref{algo:step} as both are algorithms that store and process the provenance as opposed to the two other algorithms for end and stage semantics. 
\amir{R2,D18:} {\revb In some cases, Algorithm \ref{algo:step} is faster than the algorithms for stage and end semantics. This happens when the runtimes are either very small (e.g., programs 1 and 2), or for programs 11--15. In the latter, stage and end semantics have to delete all tuples that are derived through the rule and add their delta counterparts to the database throughout the evaluation process. For Algorithm \ref{algo:step}, after creating the graph, we need to traverse a single layer. 
 }


\textbf{Runtime breakdown for Algorithms \ref{algo:independent} and \ref{algo:step}:~}
Figure \ref{fig:breakdown} shows the breakdown of the execution time for both algorithms. 
We have computed the average distribution of execution time across programs 1--15 and programs 16--20 in Table \ref{tbl:programs}.
In Figure \ref{graph:breakdown_ind}, most of the computation time is devoted to the evaluation and storage of the provenance (Eval). 
The second most expensive phase is finding the minimum satisfying assignment for the Boolean formula in the SAT solver (Solve). Converting the provenance to a Boolean formula does not require much time (Process Prov). Similarly, for Algorithm \ref{algo:step} in Figure \ref{graph:breakdown_step}, most of the time is spent on evaluation and provenance storing (Eval). Traversing and choosing the nodes with maximum benefit is the second most expensive phase (Traverse) and finally, converting the provenance into a graph and determining the benefits is negligible (Process Prov). 
Figure \ref{graph:breakdown_ind_num_rules} shows the breakdown for programs 16--20. Algorithm \ref{algo:independent} devotes a larger percentage to solving the Boolean formula.
Figure \ref{graph:breakdown_step_num_rules} shows that most of the execution time is devoted to traversing the provenance graph and finding the tuples to include in the outputted set.


\textbf{Comparison with Triggers:~}
Triggers \cite{triggers,melton2001sql,DBSystems0020812} is a standard approach for updating the database when constraints are violated. We have implemented Programs 3, 4, 5, 8 and 20 from Table \ref{tbl:programs} using triggers both in PostgreSQL and in MySQL. For programs 3 and 4, where two triggers are programmed to fire at the same event, the PostgreSQL triggers were fired alphabetically by their assigned name while the MySQL triggers fired by the order in which they were written. Due to this fact, for program 4, the PostgreSQL triggers deleted all Author tuples associated with a single organization, instead of one Organization tuple. 
In these scenarios, using step semantics would have yielded a smaller result. 
Both PostgreSQL and MySQL triggers have led to the same result as the four semantics for program 5. 
For program 8, PostgreSQL triggers, the Writes tuples were deleted using the trigger version of rule 2 and then the Publication tuples were deleted using the trigger version of rule 4. For the MySQL implementation, the results depended on the order in which the triggers were written. When the Author triggers were written before the Writes triggers, the tuples with this relation were deleted, and then their associated Publication tuples. When the order was reversed, the Writes tuples were deleted and and then their associated Publication tuples. 
If we would have applied stage semantics instead, only the Author and Writes tuples would have been deleted. Using step semantics, we would have only deleted an Author tuple and the Publication tuples associated with it (regardless of the name of the trigger or the order in which it was written). 
For program 20, the same number of tuples were deleted by the PostgreSQL triggers as for the four semantics (shown in Figure \ref{graph:num_rules_size}). The MySQL triggers were not able to terminate computation before the connection to the server was lost. 
Computing the trigger results for programs 3, 4, and 8 was negligible in terms of execution time for both PostgreSQL and MySQL implementations. For program 20, it was 3.3 minutes for PostgreSQL triggers as opposed to 2.9 minutes for end semantics, and 4.25 minutes for stage semantics, 40.3 minutes for step semantics, and 2.4 minutes for the independent semantics.


\begin{table}[]
\caption{{\revc Number of over deletions ($+$) for each of the four semantics compared with number of under repaired tuples ($-$) by \holo\ for an increasing the number of errors. Note that in contrast to \holo\ all of our semantics always fixed all violations}}\label{tbl:holo_comp}
\centering
\begin{tiny}
\begin{tabular}{c | >{\bf}c >{\bf}c >{\bf}c >{\bf}c | c}
& \multicolumn{4}{c|}{Deleted Tuples} & {Repaired Tuples} \\
\hline
Errors & Ind & Step & Stage & End & \holo \\
\hline \hline
100  & +0 & +0 & +389 & +389 & -26 \\
200  & +0 & +1 & +479 & +479 &  -60 \\
300  & +0 & +5 & +630 & +630 &  -128 \\
500  & +0 & +16 & +786 & +786 &  -234 \\
700  & +0 & +21 & +878 & +878 &  -480 \\
1000 & +0 & +34 & +1000 & +1000 &  -693 \\
\hline
\end{tabular}
\end{tiny}
\end{table}

\setlength{\tabcolsep}{3pt}
\begin{table}[]
\caption{{\revc Number of tuples that violate a DC with other tuples in the table  after/before the repair for both \holo\ and our four semantics. Some tuples participate in multiple violations. 
}}\label{tbl:holo_violations}
\begin{tiny}
\begin{tabular}{c | c c c c c | >{\bf}c}
& \multicolumn{5}{c|}{\holo} & {Semantics} \\
\hline
Errors & $DC_1$ & $DC_2$ & $DC_3$ & $DC_4$ & Total & Total \\
\hline \hline
100  & 22/42 & 30/46 & 0/112 & 0/415 & 52/615 & 0/615 \\
200  & 42/82 & 78/110 & 0/208 & 0/563 & 120/963 & 0/963 \\
300  & 94/158 & 98/140 & 64/302 & 187/761 & 443/1361 & 0/1361 \\
500  & 134/254 & 116/246 & 218/500 & 464/1015 & 932/2015 & 0/2015 \\
700  & 198/320 & 182/364 & 580/716 & 872/1272 & 1832/2672 & 0/2672 \\
1000 & 238/474 & 186/520 & 962/1006 & 1355/1612 & 2741/3612 & 0/3612 \\
\hline
\end{tabular}
\end{tiny}
\end{table}


\begin{figure}[!htb]
    \centering
    \begin{subfigure}[b]{0.49\linewidth}
        \centering
        \includegraphics[trim=0cm 0.5cm 0cm 0cm, width=1.6in]{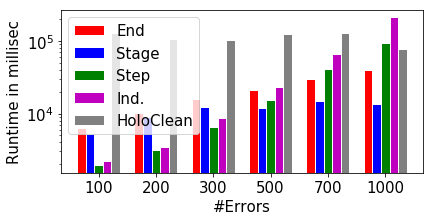}
        \caption{Increasing \#errors}
        \label{graph:holo_runtime_errors}
    \end{subfigure}%
    \begin{subfigure}[b]{0.49\linewidth}
        \centering
        \includegraphics[trim=0cm 0.5cm 0cm 0cm, width=1.6in]{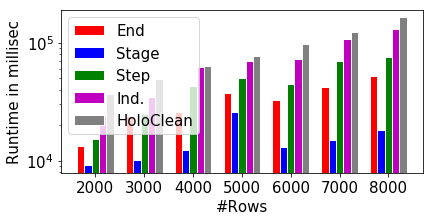}
        \caption{Increasing \#rows}
        \label{graph:holo_runtime_rows}
    \end{subfigure}
    \caption{{\revc Runtime comparison with \holo\ for increasing number of errors (rows set to 5000) and number of rows (errors set to 700)}}
    \label{fig:holo_runtimes}
    \vspace{-3mm}
\end{figure}

\textbf{Comparison with HoloClean:~} 
{\revc
\holo\ \cite{RekatsinasCIR17} is a data repair system that relaxes hard constraints (as opposed to our system that views the delta rules as hard constraints) and uses a probabilistic model to infer cell repairs (instead of tuple deletions) in order to clean the database. It leverages DCs, among other methods, to detect and repair cells. 
\holo\ uses the context of the cell and statistical correlations to repair cells, rather than delete tuples solely based on constraints, as we do for our semantics. In addition, \holo\ does not support cascade deletion. Nevertheless, we have examined what would happen if \holo\ was used in the same context as our system and what would be the difference in results, while also examining the performance of our algorithms for the different semantics in this scenario. 
We have used the code of the system from \cite{holocode}  with the default configuration that allows for a single table to be inputted. 
Our comparison used the \auth\ table as presented at the start of this section with an extra attribute stating the organization name: \auth(\underline{aid}, name, oid, organization).
We have used four DCs, expressed here as delta rules: 

\begin{scriptsize}
\begin{lstlisting}
($DC_1$) $\Delta_{A_1}(a_1,n_1,o_1,on_1)$ :- $A_1(a_1,n_1,o_1,on_1)$, $A_2(a_2,n_2,o_2,on_2),$ $a_1 = a_2, o_1 \neq o_2$
($DC_2$) $\Delta_{A_1}(a_1,n_1,o_1,on_1)$ :-  $A_1(a_1,n_1,o_1,on_1),$ $A_2(a_2,n_2,o_2,on_2),$ $a_1 = a_2, n_1 \neq n_2$
($DC_3$) $\Delta_{A_1}(a_1,n_1,o_1,on_1)$ :- $A_1(a_1,n_1,o_1,on_1),$ $A_2(a_2,n_2,o_2,on_2),$ $a_1 = a_2, on_1 \neq on_2$
($DC_4$) $\Delta_{A_1}(a_1,n_1,o_1,on_1)$ :- $A_1(a_1,n_1,o_1,on_1),$ $A_2(a_2,n_2,o_2,on_2),$ $o_1 = o_2, on_1 \neq on_2$
\end{lstlisting}
\end{scriptsize}
Note that these delta rules simulate DCs semantics. E.g., 
the first DC says that there cannot be two tuples with the same $aid$ and a different $oid$ attribute. Thus, if there is such a pair of tuples, the delta rule will delete at least one of them. 
For these DCs, the results of independent and step semantics should be exact in theory (although our algorithms are heuristic so their output may not be identical to the theoretical results), while the results of end and stage semantics should delete all tuples that satisfy any of these constraints. 
For Tables \ref{tbl:holo_comp} and \ref{tbl:holo_violations} we have taken a table of 5000 rows and increased the number of errors.
Table \ref{tbl:holo_comp} shows the results for the number of tuples deleted beyond the minimum required number by each of our semantics and the difference between the number of repairs to cells made by \holo\ (this is identical to the number of repaired tuples) to the number of required repairs. Algorithm \ref{algo:independent} deleted the same number of tuples as the number of errors. The algorithms for the rest of the semantics `over deleted', while \holo\ has performed fewer repairs than needed\footnote{This
is based on the report automatically generated by the system.} outputting an unstable database. 
In Table \ref{tbl:holo_violations} we have measured the number of tuples that violate each DC with another tuple after/before the repair for \holo, where the ``Total'' column shows the sum of violations (the sum may be larger than the size of the set, since tuples may participate in violations through multiple DCs). 
The numbers are the sizes of the results generated by running each DC as an SQL query before and after the repair. 
As guaranteed by Proposition \ref{prop:basic} and by our algorithms, every semantics repairs the database so that there are no sets of tuples that violate a DCs, where \holo\ may leave some violating sets of tuples after the repair. 
Figures \ref{graph:holo_runtime_errors} and \ref{graph:holo_runtime_rows} show the runtime performance for all semantics alongside the performance of \holo\ for an increasing number of errors with 5000 rows and for an increasing number of rows with 700 errors. 
End and stage semantics were faster than the rest, while Algorithms \ref{algo:independent} and \ref{algo:step} had similar performance to that of \holo. 
}

\section{Related Work}\label{sec:related}

\textbf{Data repair.~} \amir{R3W2,3: Add a more stark difference between prev. works and ours} 
{\revc Multiple papers have used database constraints as a tool for fixing (in our terms stabilizing) the database \cite{Afrati2009,RekatsinasCIR17,ChuIP13,BertossiKL13,FaginKK15}. 
The literature on data repair can be divided by two main 
criteria: the types of constraints considered  and 
the methods to repair the database. A wide variety of constraints with different forms and functions
have been proposed. 
Examples include functional dependencies and versions thereof \cite{bohannon2007conditional,koudas2009metric}, and  denial constraints \cite{ChomickiM05}. 
As we have discussed in Section \ref{sec:stablizing}, our model can express various forms of constraints, but our semantics allow these constraints to be interpreted in different ways and not operate according to one specific algorithm or approach. 
Regarding methods of data repair, previous works have considered two main 
approaches: (1) repairing attribute values in cells \cite{RekatsinasCIR17,ChuIP13,KolahiL09,BertossiKL13, LivshitsKR18} and (2) tuple deletion \cite{ChomickiM05,LopatenkoB07,LivshitsKR18}; 
our work focuses on the latter. 
A major advantage of our approach is the ability to perform cascade deletions over multiple relations in the database while following different well-defined semantics (and the admin may choose which one to follow based on the application scenario).
Similar to our independent semantics, a common objective for data repairs  
is to change the database in the minimal way that will make it consistent with the constraints \cite{Afrati2009,FaginKK15, LivshitsKR18}. 
In some scenarios 
a good repair 
can be obtained by changing values in the database and the metric of minimal changes may not work well \cite{RekatsinasCIR17}. However, in our approach as in \cite{ChomickiM05}, we assume that the starting database is \emph{complete}, so the only way to fix it is by deleting tuples and thus we use the minimum cardinality metric to achieve a repair following the delta program; extending delta rules to updates of values is an interesting future work.}
Similar to our declarative repair framework by delta rules, \emph{declarative data repair} 
has been explored from multiple angles  \cite{GalhardasFSSS01,ProkoshynaSCMS15,VolkovsCSM14,HeVSLMPT16,FaginKRV16}; e.g., \cite{FaginKRV16} has focused on the rule-based framework of information extraction from text and includes a mechanism for prioritized DC repairs, while \cite{RekatsinasCIR17} expresses constraints in DDlog \cite{ShinWWSZR15}.

\textbf{Causality in databases.~}
This subject has been the explored in many previous works \cite{MeliouGMS10,MeliouGHKMS10,MeliouRS14}.
Works such as \cite{RoyS14,RoyOS15} consider causal dependencies for explaining results of aggregate queries, 
that start their operation when there is an initial event of tuple deletion called ``intervention'' and repair the database 
if a constraint is violated. In particular, \cite{RoyS14} focused on repairs with respect to foreign keys in  both ways (similar to rules (2) and (3) in Example~\ref{ex:first}), whereas our delta programs can capture these as well as more complex cascaded deletion rules. 
Moreover, interventions can also be applied in our framework, as we can add auxiliary rules to the program that will start the deletion process.

\amir{R2DD8:}

\textbf{Stable model.~}
{\revb Stable model semantics \cite{GelfondL88,DantsinEGV01} is a way of defining the semantics of the answer set of logic programs with negation. 
Stable models use the concept of a {\em reduct} of the program w.r.t. the database instance to define the model.
In stable models, if a tuple does not exist in the database, it means that its negation exists. 
In our model, a tuple that does not exist in $R_i$ does not have to be present in $\Delta_i$, i.e., $\Delta_i$ is not the negation of $R_i$, but is a record of deleted tuples from $R_i$. 
Also in our model, the head atom in each rule can only be a delta atom, rather than a positive atom as in stable model. 
Another relevant work related to our framework is \cite{GatterbauerS10}, where the authors used the concept of stable models to solve the data conflict problem with trust mappings. The way one's belief is updated from others' beliefs is expressed by weighted update rules that are similar in spirit to our delta rules. 
However, in \cite{GatterbauerS10} rules have priorities and the results of the semantics can be computed in PTime under the {\em skeptic} paradigm, while in our framework, delta rules do not have priorities and computing the results of some of our semantics is NP-hard, and they have different usages. 
}

\textbf{Deletion propagation.~}
Classic deletion propagation is the problem of evaluating the effect of deleting tuples from the database $D$ on the view obtained from evaluating a query $Q$ over $D$ \cite{GreenKIT07,GreenKT07,DeutchMRT14}. A more closely related variation is the source side-effect problem  \cite{BunemanKT02,CongFG06,CongFGLL12}, which focuses on finding the minimum set of {\em source tuples in $D$} that has to be deleted for a given tuple $t\in Q(D)$ to be removed from the result. Our approach may be combined with this problem by including the delta program as another input and solving the source side-effect problem given the delta program and a particular semantics. 

\section{Conclusions and Future Work}\label{sec:conc}
In this paper, we presented, for the first time to our knowledge,
a unified framework for repairs that involves deletions. 
We have devised a model to accommodate the constraints and four semantics that capture  behaviors inspired by DCs, a subset of SQL triggers, and causal dependencies, allowing for different interpretation of the same set of constraints. 
We studied the relationships between the results of these semantics, and explored the complexity facet of all four semantics, showing algorithms to solve the tractable cases and heuristics to handle the intractable cases. 
We also describe an extensive experimental evaluation of our algorithms. 

We have focused on programs that are not inherently recursive (Section \ref{sec:prelim}). However, all definitions and results in Sections \ref{sec:prelim}, \ref{sec:deltarules}, and \ref{sec:complexity} also apply to recursive programs. The limitation lies in Algorithms \ref{algo:independent} and \ref{algo:step}, 
that rely on the size of the provenance and its structure.
When the program is inherently recursive, the provenance size may be super-polynomial in the database size. 
Extending our solutions to general recursion is left for future work.

\paragraph{{\bf Acknowledgements}}
\small{
This research has been funded by the European
Research Council (ERC) under the European Union's Horizon 2020
research and innovation programme (Grant agreement No. 804302),
the Israeli Science Foundation (ISF) Grant No. 978/17, NSF awards IIS-1552538, IIS-1703431, NIH award 1R01EB025021-01, and the Google
Ph.D. Fellowship. The contribution of Amir Gilad is part of his Ph.D. research conducted at Tel Aviv University.
}

\bibliographystyle{abbrv}
\bibliography{bibtex.bib}
\clearpage
\section*{Appendix - Full Proofs}

\subsection*{Proofs from Section \ref{sec:deltarules}}

\begin{proof}[Proof of Proposition \ref{prop:fixpointstage}]
 Let $t$ be a stage in the evaluation. It is sufficient to show that the there is only one outcome of the evaluation at this stage. In stage $t$, we evaluate all rules of the delta program over the current database $D^t$. As stage semantics is rule-order independent and deterministic, at stage $t$ we add all the $\Delta_i$ tuples that can be derived from $D^t$ to get $\Delta_i^{t+1}$, and further delete all the tuples in $\Delta_i^{t+1}$ from $R_i^t$ to get $R_i^{t+1}$. 
 Thus, there is just a single option to move to stage $t+1$. As this holds for every stage, we just need to show that the semantics in monotone. 
 Since only delta tuples can be derived and regular tuples are being deleted at the end of every stage, the number of tuples with relations in $\mathbf{R}$ is monotonically decreasing. Every delta rule with relation $\Delta_i$ at its head, has the atom with relation $R_i$ in its body. As the tuples with these relations are exactly the tuples being deleted from the database at each stage, there exists a stage in which no more tuples with these relations who satisfy the rules exist. This is the stage which defines the fixpoint.
 \end{proof}

 \begin{proof}[proof of Proposition \ref{prop:basic}]
It is clear that $D$ is a stabilizing set as no tuples can be derived from an empty database. 

$\sigma(P, D)$ is also a stabilizing set since it results in a database which does not satisfy any rule in $P$ by the Definition of each semantic (Definitions \ref{def:indsem}, \ref{def:stepsem}, \ref{def:stagesem}, \ref{def:endsem}).
\end{proof}

\begin{proof}[proof of Proposition \ref{prop:not_unique}]
Consider the database $D = \{R_1(a), R_2(b)\}$ and the program two rules (1) $\Delta_1(x) :- R_1(x),$ $R_2(y)$, and (2) $\Delta_2(x) :- R_1(x), R_2(y)$. 
Based on all independent and step semantics' definitions, the result can be $\{R_1(a)\}$ derived from rule (1), or $\{R_2(b)\}$ derived from rule (2).
\end{proof}

\subsection*{Proof for Proposition \ref{prop:replationships} (Section \ref{sec:deltarules})}

\begin{proof}[Proof of item 1 in Proposition \ref{prop:replationships}]
First, we show \\$|Ind(P,D)| \leq |\sigma(P,D)|$. Notice that $\sigma(P,D)$ is also a stabilizing set (by Definition \ref{def:stabilizing}). Thus, $Ind(P,D)$ can be equal to $\sigma(P,D)$, or it can be smaller, since a tuple $t\in Ind(P,D)$ is not constrained by the need to be derived by a rule of $P$.

For the second part of the claim, define the database $D = \{R_1(a_1), \ldots, R_1(a_n), R_2(b)\}$ and the program with the single rule $\Delta_1(x) :- R_1(x), R_2(y)$, then the result of independent semantics is $\{R_2(b)\}$, while based on every other semantics, the result is $\{R_1(a_i)~|~ 1\leq i\leq n\}$ since there is no way of deriving $\Delta_2(b)$.
\end{proof}

\begin{proof}[Proof of item 2 in Proposition \ref{prop:replationships}]
For the first part, every delta tuple derived with stage would be derived by end semantics, since at every stage of stage semantics, all tuples that can participate in a derivation of a delta tuple can also participate in the same derivation according to end semantics (as it only updates the database at the end of the evaluation process). 

For 2, consider the database $D = \{R_1(a), R_2(a), R_3(b_1), \ldots,$ $R_3(b_n)\}$ and the following delta program:
\begin{lstlisting}
        (1) $\Delta_1(x)$ :-  $R_1(x)$
        (2) $\Delta_2(x)$ :-  $\Delta_1(x),R_2(x)$
        (3) $\Delta_3(y)$ :- $R_1(x), \Delta_2(x), R_3(y)$
\end{lstlisting}
According to end semantics, we first compute all tuples for delta tuples over $\mathbf{\Delta}$ and update the relations of $\mathbf{R}$ once we have reached the final state $T$ of evaluation. 
Therefore, we add $\Delta_{1}(a)$ (as only rule (1) is satisfied initially), then adding $\Delta_{2}(a)$ (rule (2)) and finally adding $\Delta_{3}(b_i)$ for all $1\leq i\leq n$ (rule (3)). When we have the instance $\{R_1(a), R_2(a), R_3(b_1),$ $\ldots, R_3(b_n), \Delta_{1}(a),$ $\Delta_{2}(a), \Delta_{3}(b_1),\ldots, \Delta_{3}(b_n)\}$, we update the database to $\{\Delta_{1}(a),$ $\Delta_{2}(a),\Delta_{3}(b_1),\ldots, \Delta_{3}(b_n)\}$ and this is the fixpoint database for End semantics. 

Conversely, in stage semantics, we update the relations in $\mathbf{R}$ at every stage, where a stage is defined by the state where there are no more new derivable delta tuples at state $t$. 
In this example, at the first stage, we can derive only $\Delta_{1}(a)$ from rule (1), so we update the database to be $D' = \{R_2(a), \Delta_1(a), R_3(b_1), \ldots, R_3(b_n)\}$ (deleting $R_1(a)$ and adding $\Delta_1(a)$ instead). Then, we derive $\Delta_{2}(a)$ from rule (2) and update the database to be $D'' = \{\Delta_2(a), \Delta_1(a), R_3(b_1), \ldots,$ $R_3(b_n)\}$. $D''$ is the fixpoint database for stage semantics since we cannot derive any new tuple from it.
\end{proof}

\begin{proof}[Proof of item 3 in Proposition \ref{prop:replationships}]
 The same proof for the first part and the same program and database for the second part.
\end{proof}

\begin{proof}[Proof of item 4 in Proposition \ref{prop:replationships}]
 For part 1, consider the database $D = \{R_1(a), R_2(b_1), \ldots, R_2(b_n)\}$ along with the two delta rules:
 \begin{lstlisting}
            (1) $\Delta_1(x)$ :- $R_1(x), R_2(y)$
            (2) $\Delta_2(y)$ :- $R_1(x), R_2(y)$
\end{lstlisting}
With stage semantics, the two rules are satisfied at the initial stage so we have the result which is simply $D$. For step semantics, however, we can fire rule 1 first and update the database so rule 2 cannot be satisfied to get the result $\{R_1(a)\}$.

For part 2, consider the database $D = \{R_1(a), R_2(b), R_3(c_1),\\ \ldots, R_3(c_n)\}$ along with the four delta rules:
 \begin{lstlisting}
            (1) $\Delta_1(x)$ :- $R_1(x), R_2(y)$
            (2) $\Delta_2(x)$ :- $R_1(x), R_2(y)$
            (3) $\Delta_3(z)$ :- $R_3(z), \Delta_1(x), R_2(y)$
            (4) $\Delta_3(z)$ :- $R_3(z), R_1(x), \Delta_2(y)$
\end{lstlisting}
With Stage semantics, rules (1), (2) are satisfied at the initial stage, so after this stage we have $Stage(P,D) = \{R_1(a), R_2(b)\}$. Since in the second stage no no tuples can be derived, this is $Stage(P,D)$. 
For Step semantics, however, we fire rule 1 (or rule 2) first and update the database so rule 2 (1, respectively) cannot be satisfied. We get the initial set $\{R_1(a)\}$ ($\{R_2(b)\}$). Then, rule 3 (or rule 4, respectively) are satisfied by the assignment $R_3(c_i), \Delta_1(a), R_2(b)$ (or $R_3(c_i), R_1(a), \Delta_2(b)$) so we have to add every tuple of the form $R_3(c_i)$ to the result, which means that $Step(P,D) = \{R_1(a), R_3(c_1), \ldots, R_3(c_n)\}$ (or $Step(P,D) = \{R_2(b), R_3(c_1), \ldots, R_3(c_n)\}$). 
\end{proof}

\subsection*{Proof of Proposition \ref{prop:step-indep-hard} (Section \ref{sec:complexity})}

\begin{proof}[Proof of Proposition \ref{prop:step-indep-hard} for independent semantics]
The decision problem is formulated as: given a delta program program $P$, an unstable database $D$ and an integer $k$, does $D$ has a stabilizing set of size at most $k$? 

First we show membership in NP: given a set of tuples $S\subseteq D$ such that $|S| \leq k$, we can verify that $(D\setminus S) \cup \Delta(S)$ does not satisfy any rule of $P$ in polynomial time. 

We prove hardness by showing a reduction from the Vertex Cover decision problem. 
Given a graph $G=(V,E)$ and an integer $k$, we define an unstable database $D$: for every $(u,v)\in E(G)$ we have $E(u,v), E(v,u)\in D$ and for every $v\in V(G)$ we have $VC(v) \in D$. We further define the delta program $P$ with three rules (note that the head of the rules does not matter for independent semantics and is just there for uniformity):
\begin{small}
\begin{lstlisting}
        (1) $\Delta_{VC}(x):- E(x,y), VC(x), VC(y)$
        (2) $\Delta_{VC}(x):- VC(x), \Delta_E(x,y)$
        (3) $\Delta_{VC}(y):-VC(y), \Delta_E(x,y)$
\end{lstlisting}
\end{small}
Observe that $D$ is unstable, as it satisfies rule (1) and that 
this reduction can be done in polynomial time. 
We now prove that $G$ has an vertex cover of size at most $k$ if and only if $D$ has the results of independent semantics is of size at most $k$.

Suppose $Ind(P,D) = S$ such that $|S| \leq k$. 

We first show that $S$ contains only tuples of relation $VC$ and no tuples of relation $E$. 
Assume by contradiction that $S$ contains a tuple $E(a,b)$. 
Therefore, the database $D' = (D\setminus S) \cup \Delta(S)$ would contain the tuple $\Delta_E(a,b)$. There are two options in this case: rules (2) and (3) are satisfied and $D'$ is unstable -- a contradiction to the definition of independent semantics and Proposition \ref{prop:basic}, or $S$ also has to contain $VC(a), VC(b)$, but then $S$ could have been made smaller by containing just $VC(a)$ or $VC(b)$ or both and then all three rules would not have been satisfied -- a contradiction to the definition of independent semantics. 
Thus, $S$ cannot contain a tuple of the form $E(a,b)$.

Denote by $C$ all the vertices $v\in V(G)$ if and only if $VC(v) \in S$. 
Let us now show that $C$ is a vertex cover. Denote $D' = D\setminus S$. 
Let $(u,v)\in E(G)$. We need to show that either $u \in C$ or $v \in C$ or both. Equivalently, we can show that $VC(u)$ and $VC(v)$ are not in $D'$. $D'$ is stable so it does not satisfy the single rule in the program. Therefore, we cannot have the both the facts $VC(v)$ and $VC(u)$ in $D'$ (along with $E(u,v), E(v,u)$). Hence, at least one of them has been deleted. 

Suppose we have an vertex cover $C$ for $G$ such that $|C| \leq k$. We show that $|Ind(P,D)| \leq |\{VC(v) ~|~ v\in C\}|$. We remove the tuple $VC(v)$ from $D$ and add $\Delta_{VC}(v)$ if and only if $v\in C$ (that is, $\{VC(v) ~|~ v\in C\}$ is a stabilizing set) to get the database $D'$. Thus, $D'$ contains only tuples of the form $VC(v)$ for vertices $v\not\in C$. 
We now show that $D'$ is stable, i.e., that $D'$ does not satisfy the single rule of the program. In other words, there are no tuples $VC(v)$, $VC(u)$, $E(u,v)\in D'$. Equivalently, we can show that for every $(u,v) \in E(G)$, either $u\in C$ or $v\in C$ or both. But this is true by definition, so we are done.
\end{proof}

\begin{proof}[Proof of Proposition \ref{prop:step-indep-hard} for step semantics]
The proof is very similar to the proof of Theorem  \ref{prop:step-indep-hard} with a reduction from minimum Vertex Cover. We consider the same database and rule 1 (we refer to it as our program $P$):
\begin{small}
\begin{lstlisting}
        (1) $\Delta_{VC}(x):- E(x,y), VC(x), VC(y)$
\end{lstlisting}
\end{small}
We now prove that $G$ has an vertex cover of size at most $k$ if and only if $|Step(P,D)| \leq k$ w.r.t step semantics.

Suppose $Step(P,D) = S$ such that $|S| \leq k$. 
Denote by $C$ all the vertices $v\in V(G)$ if and only if $VC(v) \in S$. 
Let us now show that $C$ is a vertex cover. Denote $D' = D\setminus S \cup \Delta(S)$. 
Let $(u,v)\in E(G)$. We need to show that either $u \in C$ or $v \in C$ or both. Equivalently, we can show that $VC(u)$ and $VC(v)$ are not in $D'$. $D'$ is stable so it does not satisfy the single rule in the program. Therefore, we cannot have the both the facts $VC(v)$ and $VC(u)$ in $D'$ (along with $E(u,v), E(v,u)$). Hence, at least one of them has been deleted. 

Suppose we have a vertex cover $C$ for $G$ such that $|C| \leq k$. We show that $Step(P,D) \subseteq S$ such that $S = \{VC(v) ~|~ v\in C\}$, so we need to show that $S$ can be derived using step semantics.

\begin{lemma*}
$S$ can be derived using step semantics.
\end{lemma*}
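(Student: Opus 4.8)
The plan is to exhibit an explicit sequence of single-rule firings under step semantics whose net effect is to delete exactly the tuples in $S=\{VC(v)\mid v\in C\}$ and that terminates in a stable database. Since $Step(P,D)$ is by definition a minimum-size derivable stabilizing set, producing such a run immediately yields $|Step(P,D)|\le|S|=|C|\le k$, which is what the forward direction of the reduction needs.

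First I would reduce to a convenient form of the cover. Any vertex cover contains a minimal one $C'\subseteq C$ (delete redundant vertices one at a time), and $|C'|\le|C|\le k$, so it suffices to treat the case where $C$ is minimal; this only shrinks $S$. The purpose of minimality is the following private-neighbor property: for every $v\in C$ there is an edge $(v,w)$ with $w\notin C$. Indeed, if every neighbor of $v$ lay in $C$, then $C\setminus\{v\}$ would still cover all edges incident to $v$, contradicting minimality. Fix one such witness $w_v\notin C$ for each $v\in C$.

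Next I would describe the firing sequence. Enumerate $C=\{v_1,\dots,v_m\}$ in an arbitrary order and, at step $i$, fire the single rule $\Delta_{VC}(x)\dl E(x,y),VC(x),VC(y)$ under the assignment $x=v_i$, $y=w_{v_i}$, deriving $\Delta_{VC}(v_i)$ and removing $VC(v_i)$. Each firing is legal in the current state: the fact $E(v_i,w_{v_i})$ is never deleted (the program only removes $VC$ tuples), $VC(v_i)$ has not yet been removed because each vertex is processed once, and $VC(w_{v_i})$ is present because $w_{v_i}\notin C$ is never selected for deletion. Hence after $m$ steps the set $D^0\setminus D^m$ of deleted non-delta tuples is exactly $S$.

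Finally I would verify that the resulting state is a fixpoint. After the $m$ firings the surviving $VC$ tuples are precisely $\{VC(w)\mid w\notin C\}$. For the rule to fire again we would need an edge $(a,b)$ with both $VC(a)$ and $VC(b)$ still present, i.e.\ $a,b\notin C$; but $C$ is a vertex cover, so every edge has an endpoint in $C$, and no such assignment exists. Thus $D^m=D^{m+1}$ and $S$ is derivable, which proves the lemma. I expect the only real obstacle to be guaranteeing that every $VC(v)$ with $v\in C$ is genuinely fireable: a vertex all of whose neighbors also lie in $C$ could be stranded if those neighbors are deleted first, and the reduction to a minimal cover is exactly what removes this danger by supplying a permanent external witness $w_v$ for each $v$, so the ordering of the firings becomes irrelevant.
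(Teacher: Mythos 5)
Your proof is correct and rests on the same core idea as the paper's: derive each $\Delta_{VC}(v)$ for $v$ in the cover via an edge whose other endpoint's $VC$ tuple is guaranteed to survive, with minimality of the cover supplying that edge. The execution differs in a useful way. The paper argues edge-by-edge: for an edge $(a,b)$ with both endpoints in $C$ it derives $\Delta_{VC}(a)$ from that edge and then looks for a second edge $(b,c)$ to derive $\Delta_{VC}(b)$, justifying the existence of $(b,c)$ by ``otherwise $C\setminus\{b\}$ would also be a vertex cover.'' That justification only produces a contradiction when $C$ is minimal (the paper never says so), and as written it does not pin down that $VC(c)$ is still present when the second firing happens. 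You repair both points explicitly: you first pass to a minimal subcover $C'\subseteq C$ (harmless, since the proposition only needs $|Step(P,D)|\le k$, and strictly necessary in general --- the lemma as literally stated fails for a cover containing an isolated vertex, whose $VC$ tuple can never be derived), and you extract the private-neighbor witness $w_v\notin C'$ once and for all, which makes every firing legal regardless of order and collapses the paper's case analysis into a single uniform schedule. What the paper's phrasing buys is brevity; what yours buys is an argument in which the legality of each step and the termination at a fixpoint are checked against the actual definition of step semantics rather than left implicit.
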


\begin{proof}
For every edge $(a,b)\in E$ for which only $a \in C$, we can derive $\Delta_{VC}(a)$ as we have the tuples $E(a,b), VC(a), VC(b)$ in $D$. Consider an edge $(a,b)$ for which $a,b\in C$. We need to specify a way to derive both under step semantics. We first use the assignment $E(a,b), VC(a), VC(b)$ to derive one of them, say $\Delta_{VC}(a)$. To derive the other vertex, say $\Delta_{VC}(b)$, we choose another edge that is adjacent to $b$, say $(b,c)$, to derive $\Delta_{VC}(b)$ with the assignment $E(b,c), VC(b), VC(c)$. 
Why is there always such an edge $(b,c)$? Since otherwise, $C$ could have been made smaller since $C\setminus \{b\}$ would also be a vertex cover.
\end{proof}

Denote $D' = (D\setminus S) \cup \Delta(S)$. Thus, $D'$ contains only tuples of the form $VC(v)$ for vertices $v\not\in C$. 
We now show that $D'$ is stable, i.e., that $D'$ does not satisfy the single rule of the program. In other words, there are no tuples $VC(v)$, $VC(u)$, $E(u,v)\in D'$. Equivalently, we can show that for every $(u,v) \in E(G)$, either $u\in C$ or $v\in C$ or both. But this is true by definition, so we are done.
\end{proof}

\subsection*{Proofs of Algorithms' Correctness (Section \ref{sec:algo})}

\begin{proof}[Proof of correctness for Algorithm \ref{algo:independent}]
\amir{verify this:}
Denote by $S$ the set returned by Algorithm \ref{algo:independent}. 
Let $F$ denote the Boolean formula of the provenance as defined in the algorithm. $F$ is a disjunction of DNF formulae, so $\neg F$ is a conjunction of CNF formulae, i.e., a CNF formula. Denote $\neg F = C_1\land \ldots \land C_m$ where each $C_i$ represents one assignment that derives a delta tuple. Let $C_i$ be a clause in $\neg F$. We show that the assignment represented by $C_i$ is voided after removing the set defined in line \ref{line:output} of Algorithm \ref{algo:independent}. 
Recall that we initialize all $\Delta_i = \emptyset$. Thus there is at least one clause in $\neg F$ composed entirely of negated variables, so at least one negated variable has to be assigned to True.
In the satisfying assignment $\alpha$ to $\neg F$, there is a literal in $C_i$, $l$, such that $\alpha(l) = True$. If $l = \neg a$, in $F$, the tuple $a$ was used for the assignment $C_i$ and it is added to the $S$ and deleted from $D$ so $C_i$ is voided and cannot be used. Otherwise $l = a$, so in $F$, $\Delta(a)$ was used for the assignment $C_i$. If $\alpha(a) = True$ in $\neg F$, then $\alpha(\neg a) = False$, meaning $a$ is not needed for the satisfying assignment to $\neg F$, and thus $\Delta(a)$ does not appear in $(D\setminus S) \cup \Delta(S)$, and thus $C_i$ is voided in this case as well.
We have established that a satisfying assignment to $\neg F$ correspond to a set of tuples $S\subseteq D$ such that $(D\setminus S) \cup \Delta(S)$ is stable. 
A satisfying assignment to $\neg F$ that gives the minimum number of negated literals the value $True$ will thus result in a set $S$ of minimum size. 



\end{proof}

\begin{proof}[Proof of correctness for Algorithm \ref{algo:step}]
We show two things: (1) Algorithm \ref{algo:step} returns a stabilizing set, and (2) every tuple in this set can be derived according to step semantics.

For (1), the end state of Algorithm \ref{algo:step} is that the provenance graph shows only derivations of tuples included in the stabilizing set $S$. Suppose there is a tuple $\Delta(t)$ that can be derived with the database $(D\setminus S) \cup \Delta(S)$. 
If $\Delta(t)$ is in the graph, it means that there is an assignment $\alpha$ and a set of non-delta tuples $t_1, \ldots, t_k$ that are used to derive it ($k \geq 1$ since $t$ is always used to derive $\Delta(t)$), such that no tuple $t_i$ is in $S$. 
If $t \in S$, then $\Delta(t)$ has been derived and $t$ has been deleted from the database. 
Therefore, $t \not\in S$. This is a contradiction as the only delta tuples that remain in the graph are ones that have their original counterparts in $S$. 


For (2), Let $t\in S$. Suppose  $\Delta(t)$ is derived in layer $j$, then all tuples in the assignment deriving $\Delta(t)$ are in layers $1$ to $j-1$. Furthermore, they are still in the provenance graph in iteration $j$ of the loop in line \ref{l:layer} since otherwise, $\Delta(t)$ would have been deleted itself. 
Thus, $\Delta(t)$ can be derived using step semantics.
\end{proof}

\end{document}